\definecolor{lightblue}{rgb}{.60,.60,1}
\def\nottoobig#1{{\hbox{$\left#1\vcenter to1.111\ht\strutbox{}\right.\n@space$}}}
\newtheorem{fact}{Fact}
\newtheorem{theorem}{Theorem}[section]
\newtheorem{lemma}[theorem]{Lemma}
\newtheorem{definition}[theorem]{Definition}
\newcommand{\leqp}{\leq^{+}}
\newcommand{\geqp}{\geq^{+}}
\newcommand{\eqp}{=^{+}}
\newcommand{\na}{n_A}
\newcommand{\nb}{n_B}
\newcommand{\nc}{n_C}
\newcommand{\pa}{p_A}
\newcommand{\pb}{p_B}
\newcommand{\pc}{p_C}
\newcommand{\xa}{x_A}
\newcommand{\xb}{x_B}
\newcommand{\xc}{x_C}
\newcommand{\ea}{E_A}
\newcommand{\eb}{E_B}
\newcommand{\ec}{E_C}
\newcommand{\ga}{G_A}
\newcommand{\gb}{G_B}
\newcommand{\gc}{G_C}
\newcommand{\zo}{\{0,1\}}
\newcommand{\mapping}{\rightarrow}
\newcommand{\ie}{$\mbox{i.e.}$}
\newcommand{\poly}{\rm{poly}}
\newcommand{\prob}{\rm{Prob}}
\def\@listI{\leftmargin\leftmargini \parsep 4.5pt plus 1pt minus 1pt\topsep6pt plus 2pt minus 2pt \itemsep  2pt plus 2pt minus 1pt}
\let\@listi\@listI
\author{ {Marius Zimand\/}
\thanks{  Department of Computer and Information Sciences, Towson University,
Baltimore, MD. http://triton.towson.edu/\~{ }mzimand}}
\title{Distributed compression through the lens of algorithmic information theory: a primer}
\begin{document}

\maketitle
\begin{abstract}
Distributed compression is the task of compressing correlated data by several parties, each one possessing  one piece of data and acting separately.  The classical Slepian-Wolf 
theorem~\cite{sle-wol:j:distribcompression} shows that if data is generated by independent draws from a joint distribution, that is by a memoryless stochastic process, then distributed compression can achieve the same compression rates as centralized compression when the parties act together. Recently, the author~\cite{zim:c:kolmslepianwolf} has obtained an analogue version of the Slepian-Wolf theorem in the framework of Algorithmic Information Theory (also known as Kolmogorov complexity).   The advantage over the classical theorem, is that the AIT version works for individual strings, without any assumption regarding the generative process. The only requirement is that the parties know the complexity profile of the input strings, which is a simple quantitative measure of the data correlation. The goal of this paper is to present in an accessible form that omits some technical details the main ideas from the reference~\cite{zim:c:kolmslepianwolf}.
\end{abstract}

\section{On busy friends wishing to share points}
\label{s:friends}
Zack has three good friends, Alice, Bob, and Charles, who share with him every piece of information they have. One day, Alice, Bob, and Charles, \emph{separately},  observe three \emph{collinear} points $A$, respectively $B$ and $C$, in the $2$-dimensional affine space over the field with $2^n$ elements. Thus, each one of Alice, Bob, and Charles possesses $2n$ bits of information, giving the two coordinates of their respective points. Due to the geometric relation, collectively, they have $5n$ bits of information, because given two points the third one can be described with just one coordinate. They want to email the points to Zack, without wasting bandwidth, that is by sending  approximately $5n$ bits, where ``approximately" means that they can afford an overhead of $O(\log n)$ bits.  Clearly, if they collaborate, they can send exactly $5n$ bits. The problem is that they have busy schedules, and cannot find  a good time to get together and thus they have to compress their points in isolation.  
How many bits do they need to send to Zack?  Let us first note some necessary requirements for the compression lengths.  Let $\na$ be the number of bits to which Alice compresses her point $A$, and let $\nb$ and $\nc$ have the analogous meaning for Bob and Charles.  It is necessary that 
\[\na + \nb + \nc \geq 5n, \]
because Zack needs to acquire $5n$ bits. It is also necessary that 
\[
\na + \nb \geq 3n, \na+\nc \geq 3n, \nb+\nc \geq 3n,
\]
 because if Zack gets somehow one of the three points, he still needs $3n$ bits of informations from the other two points. And it is also necessary that
 \[
 \na \geq n, \nb \geq n, \nc \geq n,
 \]
 because if Zack gets somehow two of the three points, he still needs $n$ bits of informations from the remaining point.

We will see that any numbers $\na, \nb$ and $\nc$ satisfying the above necessary conditions, are also sufficient up to a small logarithmic overhead, in the sense that there are probabilistic compression algorithms such that if $\na, \nb$ and $\nc$ satisfy these conditions, then Alice  can compress point $A$ to a binary string $\pa$ of length $\na + O(\log n)$, Bob can compress point $B$ to a binary string $\pb$ of length $\nb + O(\log n)$, Charles  can compress point $C$ to a binary string $\pc$ of length $\nc + O(\log n)$,  and Zack can with high probability reconstruct the three points from $\pa, \pb$ and $\pc$. Moreover, the compression does not use the geometric relation between the points, but only the correlation of information in the points, as expressed in the very flexible framework of algorithmic information theory.

\section{Algorithmic information theory}
\label{s:ait}
Algorithmic Information Theory (AIT), initiated independently by Solomonoff~\cite{sol:j:inductive}, Kolmogorov~\cite{kol:j:kolmcomplexity}, and Chaitin~\cite{cha:j:length-of-programs}, is a counterpart to the Information Theory (IT), initiated by Shannon. In IT the central object is a random variable $X$ whose realizations are strings over an alphabet $\Sigma$. The Shannon entropy of $X$ is defined by
\[
H(X) = \sum_{x \in \Sigma} P(X=x) ~(1/\log  P(X=x)).
\]
The entropy $H(X)$  is viewed as the amount of information in $X$, because each string $x$ can be described with $\lceil 1/\log P(X=x) \rceil$ bits (using the Shannon code), and therefore $H(X)$ is the expected number of bits needed to describe the outcome of the random process modeled by $X$.

AIT dispenses with the stochastic generative model, and defines the complexity of an individual string $x$ as the length of its shortest description. For example, the string $$x_1 = 00000000 00000000 00000000 00000000$$   has low complexity because it can be succinctly described as ``$2^5$ zeros." The string $$x_2 = 10110000 01010111 01010100 11011100$$ is a $32$-bit string obtained using random atmospheric noise (according to random.org), and has high complexity because it does not have a short description.

Formally, given a Turing machine $M$, a string $p$ is said to be a \emph{program} (or  a \emph{description}) of a string $x$, if $M$ on input $p$ prints $x$. We denote the length of a binary string $x$ by $|x|$. The \emph{Kolmogorov complexity} of $x$ relative to the Turing machine $M$ is
\[
C_M(x) = \min \{|p| \mid \mbox{ $p$ is a program for $x$ relative to $M$}\}.
\]
If $U$ is universal Turing machine, then for every other Turing machine $M$ there exists a string $m$ such that $U(m, p) = M(p)$ for all $p$, and therefore for every string $x$,
\[
C_U(x) \leq C_M(x) + |m|.
\]
Thus, if we ignore the additive constant $|m|$, the Kolmogorov complexity of $x$ relative to $U$ is minimal. We fix a universal Turing machine $U$, drop the subscript $U$ in $C_U(\cdot)$, and denote the complexity of $x$ by $C(x)$. We list below a few basic facts about Kolmogorov complexity:
\begin{enumerate}
\item For every string $x$, $C(x) \leq |x| + O(1)$, because  a string $x$ is trivially described by itself. (Formally,  there is a Turing machine $M$ that, for every $x$,  on input $x$ prints $x$.)
\item Similarly to the complexity of $x$, we define the complexity of $x$ conditioned by $y$ as
$C(x \mid y) = \min\{|p| \mid \mbox{ $U$ on input $p$ and $y$ prints $x$}\}.$
\item Using some standard pairing function $\langle \cdot, \cdot \rangle$ that maps pair of strings into single strings, we define $C(x,y)$ the complexity of a pair of strings  (and then we can extend to tuples with larger arity) by $C(x,y) = C(<x,y>)$.
\item We use the convenient shorthand notation $ a \leqp b$ to mean that $a \leq b + O(\log(a+b))$, where the constant hidden in the $O(\cdot)$ notation only depends on the universal machine $U$.  Similarly $a \geqp b$ means $a \geq b - O(\log(a+b))$, and $a \eqp b$ means ($a \leqp b$ and $a \geqp b$).
\item The chain rule in information theory states that $H(X,Y) = H(X) + H(Y \mid X)$. A similar rule holds true in algorithmic information theory: for all $x$ and $y$,
$C(x,y) \eqp C(x) + C(y \mid x)$.
\end{enumerate}
\section{Distributed compression, more formally}

We present the problem confronting Alice, Bob, Charles (the senders) and Zack (the receiver) in an abstract and formal setting. 
We assume that each one of Alice, Bob, and Charles has $n$ bits of information, which, in concrete terms, means that Alice has an $n$-bit binary string $\xa$,  Bob has an $n$-bit binary string $\xb$,  and Charles has an $n$-bit binary string $\xc$.  We also assume that the $3$-tuplet $(\xa, \xb, \xc)$ belongs to a set $S \subseteq \zo^n \times \zo^n \times \zo^n$, which defines the way in which the information is correlated (for example, $S$ may be the set of all three collinear points) and that all parties (\ie, Alice, Bob, Charles, and Zack) know $S$. Alice is using an encoding fumction $E_A :\zo^n \mapping \zo^{\na}$,  Bob is using an encoding fumction $E_B :\zo^n \mapping \zo^{\nb}$,  Charles is using an encoding fumction $E_C :\zo^n \mapping \zo^{\nc}$,  and Zack is using a decoding fumction $D :\zo^{\na} \times \zo^{\nb} \times \zo^{\nc} \mapping \zo^{n}$.  Ideally, the requirement is that for all $(\xa, \xb, \xc)$ in $S$, $D(E_A(\xa), E_B(\xb), E_C(\xc)) = (\xa, \xb, \xc)$.  However, since typically the encoding functions are probabilistic, we allow the above equality to fail with probability bounded by some small $\epsilon$, where the probability is over the random bits used by the encoding functions. Also, sometimes, we will be content if the encoding/decoding procedures work, not for all, but only for ``most" $3$-tuples in $S$ (\ie, with probability close to $1$, under a given probability distribution on $S$).

Our focus in this paper is to  present distributed compression in the framework of Algorithmic Information Theory, but let us present first  the point of view of Information Theory,  where the problem  has been studied early on. The celebrated classical theorem of Slepian and Wolf~\cite{sle-wol:j:distribcompression} characterizes the possible compression rates $\na, \nb$ and $\nc$ for the case of \emph{memoryless} sources. The memoryless assumption means that $(\xa, \xb, \xc)$ are realizations of random variables $(X_A, X_B, X_C)$, which consist of $n$ independent  copies of a random variable that has a joint distribution $P(b_1, b_2, b_3)$ on triples of bits. In other words, the generative model for $(\xa, \xb, \xc)$ is a stochastic process that consists  of $n$ independent draws from the joint distribution, such that Alice observes $\xa$, the sequence of first components in the $n$ draws, Bob observes $\xb$, the second components, and Charles observes  $\xc$, the third components.  A stochastic process of this type is called 3-DMS (Discrete Memoryless Source). By Shannon's Source Coding Theorem, if  $n'$ is a number that is at least $H(X_A, X_B, X_C)$  and if Alice, Bob, and Charles put their data together, then, for every $\epsilon > 0$, there exists an encoding/decoding pair $E$ and $D$, where $E$  compresses $3n$-bit strings to $(n' + \epsilon n)$-bit strings and $D(E(X_A, X_B,X_C)) = (X_A, X_B, X_C)$ with probability $1-\epsilon$, provided $n$ is large enough.  The second part of Shannon's Source Coding Theorem shows that this is essentially optimal because if the data is compressed to length smaller than $H(X_A,X_B,X_C) -\epsilon n$ (for constant $\epsilon$), then the probability of correct decoding  goes to $0$. The Slepian-Wolf Theorem shows that such a compression can also be done if Alice, Bob, and Charles compress separately. Actually, it describes precisely the possible compression lengths. Note that if the three senders  compress separately to lengths $\na, \nb$ and $\nc$ as indicated above, then it is essentially necessary that $\na + \nb  + \nc \geq H(X_A, X_B, X_C) - \epsilon n$, $\na + \nb \geq  H(X_A, X_B \mid X_C) - \epsilon n$ (because even if Zack has $X_C$, he still needs to receive a number of bits equal to the amount of entropy in $X_A$ and $X_B$ conditioned  by $X_C$), $\na  \geq H(X_A \mid X_B, X_C) - \epsilon n$ (similarly,  even if Zack has $X_B$ and $X_C$, he still needs to receive a number of bits equal to the amount of entropy in $X_A$ conditioned by $X_B$ and $X_C$), and there are the obvious other necessary conditions obtained by permuting $A, B$ and $C$. The Slepian-Wolf Theorem shows that for 3-DMS these necessary conditions are, essentially, also sufficient, in the sense that the slight change of $-\epsilon n$ into $+\epsilon  n$ allows encoding/decoding procedures. Thus, in the next theorem we suppose that $\na + \nb  + \nc \geq H(X_A, X_B, X_C) + \epsilon n$, and similarly for the other relations.
\begin{theorem}[Slepian-Wolf Theorem~\cite{sle-wol:j:distribcompression}]
Let $(X_A, X_B, X_C)$ be a 3-DMS, let $\epsilon > 0$,  and let $\na, \nb, \nc$ satisfy the above  conditions (with $+\epsilon n$ instead of $-\epsilon n$). Then there exist encoding functions $E_A : \zo^n \mapping \zo^{\na}, E_B: \zo^n \mapping \zo^{\nb}, E_C : \zo^n \mapping \zo^{\nc}$ and a decoding function $D: \zo^{\na} \times \zo^{\nb} \times \zo^{\nc} \mapping \zo^n$ such that $Prob [D(E(X_A, X_B, X_C)) = (X_A,X_B,X_C)] \geq 1- O(\epsilon)$, provided $n$ is large enough.
\end{theorem}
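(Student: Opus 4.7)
The plan is to prove this via the standard random binning construction together with joint typicality decoding. First I would introduce the jointly $\epsilon$-typical set $T_\epsilon \subseteq \zo^n \times \zo^n \times \zo^n$ consisting of triples $(\xa,\xb,\xc)$ whose empirical distribution is close to $P(X_A,X_B,X_C)$. By the Asymptotic Equipartition Property, for $n$ large enough, $\prob[(X_A,X_B,X_C) \notin T_\epsilon] \leq \epsilon$, the set $T_\epsilon$ has size at most $2^{(H(X_A,X_B,X_C)+\epsilon)n}$, and the conditional typical slices satisfy the analogous size bounds $|T_\epsilon(x_C)| \leq 2^{(H(X_A,X_B \mid X_C)+\epsilon)n}$, $|T_\epsilon(x_B,x_C)| \leq 2^{(H(X_A \mid X_B,X_C)+\epsilon)n}$, and symmetrically for the other coordinates.

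Next I would construct the encoders by random binning: independently and uniformly for each $x \in \zon$, pick $\ea(x) \in \zo^{\na}$, $\eb(x) \in \zo^{\nb}$, and $\ec(x) \in \zo^{\nc}$ uniformly at random. The decoder $D$, on input $(\pa,\pb,\pc)$, searches for a unique triple $(\xa',\xb',\xc') \in T_\epsilon$ with $\ea(\xa')=\pa$, $\eb(\xb')=\pb$, $\ec(\xc')=\pc$, and outputs it if it is unique; otherwise it declares failure.

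The heart of the argument is the error analysis. The decoder errs only if the true triple is atypical (probability at most $\epsilon$) or if some competing triple in $T_\epsilon$ is consistent with the three hashes. I would split the bad events according to which nonempty subset $S \subseteq \{A,B,C\}$ of coordinates is different from the true triple, giving seven cases. For $S=\{A\}$, the number of competitors is at most $2^{(H(X_A \mid X_B,X_C)+\epsilon)n}$ and the hash-collision probability is $2^{-\na}$; the expected contribution is then $2^{(H(X_A \mid X_B,X_C)+\epsilon)n - \na}$, which is $\leq 2^{-\Omega(\epsilon n)}$ by the rate assumption $\na \geq H(X_A \mid X_B,X_C)+\epsilon n$. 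For $S=\{A,B\}$ the expected contribution is $2^{(H(X_A,X_B \mid X_C)+\epsilon)n - \na - \nb}$, and for $S=\{A,B,C\}$ it is $2^{(H(X_A,X_B,X_C)+\epsilon)n - \na - \nb - \nc}$; the remaining four cases follow by symmetry. A union bound over all seven cases and over the atypicality event gives total error $O(\epsilon)$ in expectation over the random binning, so some specific choice of $\ea, \eb, \ec$ and a corresponding $D$ achieve error $O(\epsilon)$.

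The main obstacle I anticipate is bookkeeping in the error analysis: one must enumerate all nonempty subsets $S$ of $\{A,B,C\}$ that distinguish a competitor from the truth and, for each, correctly identify both the conditional typical set bound (governed by the entropy of $X_S$ conditioned on $X_{\{A,B,C\} \setminus S}$) and the matching rate constraint (a sum of $n_i$ for $i \in S$). The rest — the AEP estimates and the averaging argument that derandomizes the binning — is routine once those pairings are set up correctly.
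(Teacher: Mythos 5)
The paper does not actually prove this classical theorem; it states it as background and cites Slepian and Wolf directly, so there is no in-paper proof to match. Your proposal is the standard textbook argument (random binning plus joint-typicality decoding, error decomposition over the seven nonempty subsets of $\{A,B,C\}$, derandomization by averaging), and it is correct in outline. One quantitative point you should tighten: you use the same $\epsilon$ for the typical-set size exponent and for the rate slack, so in the case $S=\{A\}$ the expected collision bound becomes $2^{(H(X_A\mid X_B,X_C)+\epsilon)n - \na} \leq 2^{0} = 1$, not $2^{-\Omega(\epsilon n)}$; you need the typicality parameter (say $\epsilon/2$) to be strictly smaller than the rate slack $\epsilon$ so each case is $2^{-\Omega(\epsilon n)}$ and the union bound over the seven cases plus the atypicality event gives $O(\epsilon)$. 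It is also worth noting that the paper's own implied route to this statement is entirely different from yours: in the remarks following Theorem~\ref{t:kolmslepwolf}, the author observes that the classical theorem follows as a corollary of the Kolmogorov-complexity version, using the concentration fact that for a memoryless source $|C(X) - H(X)| \leq c_\epsilon\sqrt{n}$ with probability $1-\epsilon$. Your direct proof is more self-contained and elementary; the paper's route buys a conceptual unification, showing the probabilistic theorem as a shadow of the individual-sequence theorem.
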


There is nothing special about three senders, and indeed the Slepian-Wolf theorem holds for any number $\ell$ of senders, where $\ell$ is a constant, and for sources which are $\ell$-DMS over any alphabet $\Sigma$. This means that the senders compress $(x_1, \ldots, x_\ell)$, which is realization of random variables $(X_1, \ldots, X_\ell)$, obtained  from $n$ independent draws from a joint distribution $p(a_1, \ldots, a_\ell)$, with each $a_i$ ranging over the alphabet $\Sigma$. The $i$-th sender observes the realization $x_i$  of $X_i$, and uses an encoding function $E_i : \Sigma^n \mapping \Sigma^{n_i}$.  Suppose that the compression lengths $n_i$, $i=1, \ldots, \ell$, satisfy $\sum_{i \in V} n_i \geq H(X_V \mid X_{\overline{V}}) + \epsilon n$, for every subset $V \subseteq \{1, \ldots, \ell\}$ (where if $V= \{i_1, \ldots, i_t\}$, $X_V$ denotes the tuple $(X_{i_1}, \ldots, X_{i_t})$, and $\overline{V}$ denotes $\{1, \ldots, \ell\} - V$).  Then the Slepian-Wolf theorem for $\ell$-DMS states  that there are $E_1, \ldots, E_\ell$ of the above type, and $D: \Sigma^{n_1} \times \ldots \times \Sigma^{n_\ell} \mapping \Sigma^n$ such that
$D(E_1(X_1), \ldots, E_\ell(X_\ell)) = (X_1, \ldots, X_\ell)$ with probability $1-\epsilon$.

As pointed out above, the Slepian-Wolf theorem shows the surprising and remarkable fact that, for memoryless sources, distributed compression can be done at an optimality level that is on a par with centralized compression. On the weak side, the memoryless property means that there is a lot of independence in the generative process: the realization at time $i$ is independent of the realization at time $i-1$. Intuitively, independence helps distributed compression. For example, in the limit case in which the senders observe realizations of fully independent random variables, then, clearly, it makes no difference whether compression is distributed or centralized. The Slepian-Wolf theorem has been extended to sources that are stationary and ergodic~\cite{cov:j:slepwolfergodic}, but these sources are still quite simple, and intuitively realizations which are temporally sufficiently apart are close to being independent.

One may be inclined to believe that the optimal compression rates of distributed compression in the theorem are caused by the independence properties of the sources. However, this is not so, and  we shall see that in fact the Slepian-Wolf phenomenon does not require any type of independence.  Even more, it works without any generative model. For that we need to work in the framework of Kolmogorov complexity (AIT). 

Let us recall the example from Section~\ref{s:friends}: Alice, Bob, and Charles observe separately, respectively, the collinear points $A, B, C$. Even without assuming any generative process for the three points, we can still express their correlation using Kolmogorov complexity. More precisely, their correlation is described by the Kolmogorov complexity profile, which consists of $7$ numbers, giving the complexities of all  non-empty subsets of $\{A, B, C \}$: $(C(A), C(B), C(C), C(A,B), C(A,C), C(B,C), C(A,B,C))$.

Let us consider the general case, in which the three senders have, respectively, $n$-bit strings $\xa, \xb,\xc$ having a given complexity profile $(C(x_V) \mid V\subseteq \{\xa,\xb,\xc\}$ , $V \not= \emptyset)$ (where $x_V$ is the notation convention that we used for the $\ell$-senders case of the Slepian-Wolf theorem).  What are the possible compression lengths, so that Zack can decompress and obtain $(\xa, \xb, \xc)$ with probability $(1-\epsilon)$?

To answer this question, for simplicity, let us consider the case of a single sender, Alice. She  wants to use a probabilistic encoding function $E$ such that there exists a decoding function $D$ with the property that for all $n$, and for all $n$-bit strings $x$, $D(E(x)) = x$, with probability $1-\epsilon$. A lower bound on the length $|E(x)|$ is given in the following lemma.
\begin{lemma}
Let $E$ be a probabilistic encoding function, and $D$ be a decoding function such that for all  strings $x$, $D(E(x)) = x$, with probability $(1-\epsilon)$. Then for every $k$, there is a string $x$ with $C(x ) \leq k$, such that $|E(x)| \geq k + \log(1-\epsilon) - O(1)$.
\end{lemma}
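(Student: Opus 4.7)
My plan is to proceed by contradiction via a counting (pigeonhole) argument on the set $S_k = \{x : C(x) \leq k\}$. I suppose that \emph{every} $x$ with $C(x) \leq k$ has $|E(x)| < L$, where $L := k + \log(1-\epsilon) - c$ for a sufficiently large constant $c$ to be fixed at the end, and derive a contradiction by over-counting the probability mass the decoder can allocate on $S_k$.

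The central estimate compares two sides. Summing the success-probability guarantee over $x \in S_k$ gives
\[
\sum_{x \in S_k}\Pr[D(E(x)) = x] \geq (1-\epsilon)\,|S_k|.
\]
For the upper bound I exchange the order of summation: for each output string $y$, the contribution is $\mathbf{1}[D(y) \in S_k]\cdot\Pr[E(D(y)) = y] \leq 1$, and this term is nonzero only when $y$ lies in the support of $E(D(y))$, which by the contradiction hypothesis forces $|y| < L$. Since $D$ is a function, distinct $x$'s in $S_k$ are witnessed by disjoint sets of $y$'s, so no $y$ is counted twice; the total is therefore at most the number of strings of length less than $L$, namely $2^L = (1-\epsilon)\,2^{k-c}$. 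Combining the two inequalities yields $|S_k| \leq 2^{k-c}$.

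To close the contradiction I invoke the basic bound $C(x) \leq |x| + O(1)$ from item~1 of Section~\ref{s:ait}: every string of length at most $k - c_0$ lies in $S_k$ for some absolute constant $c_0$, so $|S_k| \geq 2^{k-c_0}$. Choosing $c > c_0$ contradicts the previous inequality, so there must exist $x \in S_k$ with $|E(x)| \geq k + \log(1-\epsilon) - O(1)$, as required.

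The only delicate point I will be careful about is the meaning of $|E(x)|$ when $E$ is probabilistic: the natural convention is that, for each fixed $x$, all outputs of $E(x)$ have a common length $\ell_x$, or more liberally that $|E(x)|$ upper-bounds the length of every string in the support of $E(x)$. Under either convention the counting step goes through verbatim. I do not anticipate any other real obstacle --- the argument is essentially the classical counting lower bound for noiseless source coding, transposed to the Kolmogorov setting via the lower bound $|S_k| \geq 2^{k-O(1)}$.
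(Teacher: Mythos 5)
Your proof is correct and reaches the stated bound. The high-level strategy --- comparing the number of strings of complexity at most $k$ against the number of strings of length below the threshold --- matches the paper's, but the implementation differs in two respects. For the lower bound $|S_k| \geq 2^{k-O(1)}$ you appeal to $C(x) \leq |x| + O(1)$, so that all sufficiently short strings fall in $S_k$; the paper instead observes that the first string \emph{outside} $S_k$ can be described by its rank and hence has complexity at most $\log|S_k| + O(1)$, which would be below $k$ if $S_k$ were too small. Both yield the same bound and are equally elementary. More substantively, for the probabilistic encoder you dispense with the paper's averaging step: the paper fixes a favourable random seed $\rho$ so that $E(\cdot,\rho)$ decodes a $(1-\epsilon)$-fraction $S'\subseteq S_k$ correctly, and then uses injectivity of the now-deterministic map $E(\cdot,\rho)$ on $S'$; you instead keep the randomness and directly double-count the decoding success mass $\sum_{x\in S_k}\Pr[D(E(x))=x]$, exchanging the order of summation and using that $D$ is a function, so distinct $x$'s claim disjoint sets of preimages $y$. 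Your version is a bit more streamlined since it avoids invoking an averaging step; the paper's version isolates a concrete deterministic injective restriction of $E$, which is a cleaner object to reason about and mirrors the ``fix the randomness'' pattern reused throughout the multi-sender analysis. Both correctly establish the existence of $x$ with $C(x)\leq k$ and $|E(x)| \geq k + \log(1-\epsilon) - O(1)$.
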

\begin{proof}
Fix $k$ and let $S= \{x  \mid C(x ) \leq k\}$. It  can be shown that for some constant $c$, $|S| \geq 2^{k-c}$, where $|S|$ is the size of $S$ (the idea is that the first string which is not in $S$ can be described with $\log |S| + O(1)$ bits).  Since for every $x \in S$, $D(E(x, \rho))= x$ with probability $1-\epsilon$ over the randomness $\rho$, there is some fixed randomness $\rho$ such that $D(E(x,\rho))=x$, for a fraction of $1-\epsilon$ of the $x$'s in $S$. Let $S' \subseteq S$ be the set of such strings $x$.  Thus, $|S'| \geq (1-\epsilon)|S| \geq (1-\epsilon)2^{k-c}$ and the function $E(\cdot,\rho)$ is one-to-one on $S'$ (otherwise decoding would not be possible). Therefore the function $E(\cdot, \rho)$ cannot map all $S'$ into strings of length $k + \log(1-\epsilon) - (c+1)$.
\end{proof}
In short, if for every $x$, $D(E(x))= x$ with probability $1-\epsilon$, then for infinitely many $x$ it must be the case that $|E(x)| \geq C(x) + \log (1-\epsilon) - O(1)$. In other words, if we ignore the small terms, it is not possible to compress to length less than $C(x)$.

In the same way, similar lower bounds can be established for the case of more senders. For example, let us consider three senders that use the probabilistic encoding functions $E_A, E_B$ and $E_C$:  if there is a decoding function $D$ such that for every $(\xa,\xb,\xc)$,
\[
D(E_A(\xa), E_B(\xb), E_C(\xc))  = (\xa, \xb, \xc), \mbox{~~with  probability $1-\epsilon$},
\]
(where the probability is over the randomness used by the encoding procedures) then for infinitely many $(\xa,\xb,\xc)$
\[
\begin{array}{rl}
|E_A(\xa)| + |E_B(\xb)| +|E_C(\xc)| &\geq C(\xa,\xb,\xc) + \log(1-\epsilon) - O(1), \\
|E_A(\xa)| + |E_B(\xb)|  &\geq C(\xa,\xb\mid \xc) + \log(1-\epsilon) - O(1), \\
|E_A(\xa)| &\geq C(\xa \mid \xb,\xc) + \log(1-\epsilon) - O(1),\\
\end{array}
\]
and similar relations hold for any permutation of $A, B$ and $C$. As we did above, it is convenient to use the notation convention that if $V$ is a subset of $\{A,B,C\}$, we let $x_V$ denote the tuple of strings with indices in $V$ (for example, if $V=\{A,C\}$, then $x_V=(\xa,\xc)$). Then the above relations can be written concisely as
\[
\sum_{i \in V} |E_i(x_i)| \geq C(x_V \mid x_{\{A,B,C\}-V}) + \log(1-\epsilon) - O(1), \mbox{ for all $V \subseteq \{A,B,C\}$},
\]

The next theorem is the focal point of this paper. It shows that the above necessary conditions regarding the compression lengths are, essentially, also sufficient. 
\begin{theorem}[Kolmogorov complexity version of Slepian-Wolf coding~\cite{zim:c:kolmslepianwolf} ]
\label{t:kolmslepwolf}
There exist probabilistic algorithms $E_A, E_B, E_C$, a deterministic algorithm $D$,  and a function $\alpha(n) = O(\log  n)$ such  that 
for every $n$, for every tuple of integers $(\na,\nb,\nc)$, and for every tuple of $n$-bit strings $(\xa, \xb, \xc)$ if
\begin{equation}
\label{e:constraint}
\sum_{i \in V} n_i \geq C(x_V \mid x_{\{A,B,C\}-V}), \mbox{ for all $V \subseteq \{A,B,C\}$},
\end{equation}
then
\begin{itemize}
\item[(a)] $E_A$ on input $\xa$ and $\na$ outputs a string $\pa$ of length at most $\na + \alpha(n)$, $E_B$ on input $\xb$ and $\nb$ outputs a string $\pb$ of length at most $\nb+ \alpha(n)$,  $E_C$ on input $\xc$ and $\nc$ outputs a string $\pc$ of length at most $\nc + \alpha(n)$, 
\item[(b)]  $D$ on input $(\pa,\pb, \pc)$ outputs $(\xa, \xb, \xc)$, with probability $1-1/n$.

\end{itemize}
\end{theorem}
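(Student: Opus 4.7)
The plan is to make each sender hash its input with a short-seed pseudo-random hash function, and then have Zack enumerate triples that simultaneously match all three hashes and respect all seven complexity constraints of~(\ref{e:constraint}). Concretely, for each $i\in\{A,B,C\}$, the encoder $E_i$ on input $(x_i,n_i)$ draws a random seed $\rho_i\in\zo^{O(\log n)}$ that selects a hash $h_{\rho_i}\colon\zo^n\mapping\zo^{n_i+\alpha(n)}$ from a family with near-pairwise-independence (see below), and outputs $p_i=\langle\rho_i,n_i,h_{\rho_i}(x_i)\rangle$, which has total length $n_i+\alpha(n)$ for a suitable $\alpha(n)=O(\log n)$. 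The decoder $D$ parses each $p_i$ into $(\rho_i,n_i,q_i)$ and then dovetails, enumerating triples $(y_A,y_B,y_C)$ together with candidate certifying programs, and outputs the first triple for which (i) $h_{\rho_i}(y_i)=q_i$ for every $i$, and (ii) for every non-empty $V\subseteq\{A,B,C\}$ some program of length at most $\sum_{i\in V}n_i$ prints $y_V$ on input $y_{\{A,B,C\}-V}$. Condition~(ii) is only semi-decidable, but that suffices once uniqueness is established.

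To show correctness, observe that hypothesis~(\ref{e:constraint}) makes the true triple $(x_A,x_B,x_C)$ satisfy~(i) trivially and~(ii) by definition. Now fix any wrong triple and let $V$ be the non-empty set of coordinates on which it differs from $(x_A,x_B,x_C)$; then $y_{\{A,B,C\}-V}=x_{\{A,B,C\}-V}$, so condition~(ii) for this $V$ forces $y_V$ to lie in the set of $\leq 2^{\sum_{i\in V}n_i+O(1)}$ outputs producible, on input $x_{\{A,B,C\}-V}$, by programs of that length. For~(i) to hold on the coordinates in $V$ (it is automatic on the other coordinates), the independently drawn seeds $\{\rho_i\}_{i\in V}$ must produce a hash collision between $x_i$ and $y_i$ at each $i\in V$, which by the pairwise-collision bound of the family occurs with probability at most $\prod_{i\in V}2^{-(n_i+\alpha(n))}$. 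Multiplying, the expected number of surviving bad candidates with difference pattern $V$ is at most $2^{O(1)-|V|\alpha(n)}$, and picking $\alpha(n)$ a sufficiently large $O(\log n)$ function drops this below $1/(7n)$. A union bound over the seven non-empty subsets $V$ then yields the required error~$\leq 1/n$.

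The delicate point is the hash family itself: its seeds must have only $O(\log n)$ bits and yet its pairwise-collision probability on $\zo^n$ must be around $2^{-(n_i+\alpha(n))}$. A textbook pairwise-independent family (random linear maps, Toeplitz matrices, \ldots) uses $\Omega(n)$ bits of seed, far beyond the budget, so one has to appeal to an explicit extractor- or expander-type construction; one must also verify that, taken jointly, the three independently chosen seeds $\rho_A,\rho_B,\rho_C$ are pseudo-random enough that the union-bound calculation above actually goes through simultaneously for all seven choices of~$V$. A secondary subtlety is making the decoder's semi-decidable search for short certifying programs terminate on the correct triple; this is handled by padding $\alpha(n)$ with an extra $O(\log n)$ bits that encode an explicit time bound for the program search, at no asymptotic cost.
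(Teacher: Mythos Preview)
Your argument has a genuine gap at its central step. The pairwise collision bound you invoke, $\Pr_{\rho_i}[h_{\rho_i}(x_i)=h_{\rho_i}(y_i)]\le 2^{-(n_i+\alpha(n))}$, is information\nobreakdash-theoretically impossible with $|\rho_i|=O(\log n)$: there are only $\poly(n)$ seeds, so for any fixed pair the collision probability is a multiple of $1/\poly(n)$ and hence either $0$ or at least $1/\poly(n)$. You acknowledge this and point to extractors, but extractors do not give pairwise collision bounds; they only say that a \emph{random} element of a large enough set, together with a random seed, has near-uniform output. To conclude anything about the \emph{specific} input $x_i$ one needs a further argument---this is exactly the role of the paper's ``rich owner'' machinery: at most a $\delta$-fraction of any $B$ are poor owners, and since a poor owner can be described in fewer than $C(x_i\mid\cdot)$ bits, $x_i$ cannot be one.

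Even granting such a fix for the singleton $V$'s, your union bound fails for $|V|\ge 2$. Take $V=\{A,B\}$: the candidate set $T=\{(y_A,y_B): C(y_A,y_B\mid x_C)\le n_A+n_B\}$ has size at most $2^{n_A+n_B+1}$, but its projection to the $A$-coordinate can have size up to $2^{n_A+n_B+O(1)}$, far exceeding the $2^{n_A+1}$ threshold that would put $G_A$ in its small regime; so you cannot argue coordinate by coordinate that $x_A$ owns its hash value relative to this projection, and the independence of $\rho_A,\rho_B$ does not help because $T$ is not a product set. This is precisely why the paper does \emph{not} attempt a direct union bound over the seven $V$'s. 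Instead it argues inductively, using the \emph{large-regime} rich owner property to establish relations such as $C(p_A)\eqp n_A$ and $C(x_B\mid p_A)\eqp C(x_A,x_B)-n_A$ (Fact~\ref{f:f1}), which reduce the two-sender problem conditioned on $p_A$ to a one-sender problem for $x_B$, and so on. The paper's hashing appears only at the end, to disambiguate among $O(1)$ (and later $O(n^7)$) outputs of parallel decoding branches; there both hash output and seed have length $O(\log n)$, so the impossibility above does not arise.

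A smaller issue: your proposed fix for termination---having the encoders append an $O(\log n)$-bit time bound---cannot work, because each encoder $E_i$ sees only $x_i$ and cannot bound the running time needed to certify constraints such as $C(x_A,x_B\mid x_C)\le n_A+n_B$ that involve the other senders' strings. Fortunately no time bound is needed once uniqueness holds: plain dovetailing already halts on the correct triple.
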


We present the proof of this theorem in the next section, but for now, we make several remarks:
\begin{itemize}
\item Compression procedures for \emph{individual} inputs (\ie, without using any knowledge regarding the generative process) have been previously designed using the celebrated Lempel-Ziv methods~\cite{lem-ziv:j:compress,ziv:j:compressindivid}. Such methods have been used for distributed compression as well~\cite{ziv:j:compresshelper,due-wol:j:slep-wolf-individ,kas:j:slep-wolf-individ}. For such procedures two kinds of optimality have been established, both valid for infinite sequences and thus having an asymptotic nature. First,  the procedures achieve a compression length that is asymptotically equal to the so-called finite-state complexity,  which is the minimum length that can be  achieved by finite-state encoding/decoding procedures.  Secondly, the compression rates are asymptotically optimal in case the infinite sequences are generated by sources that are stationary and ergodic~\cite{wyn-ziv:j:compress}. In contrast, the compression in Theorem~\ref{t:kolmslepwolf} applies to finite strings and achieves a compression length close to minimal description length. On the other hand, the Lempel-Ziv approach has lead to efficient compression algorithms that are used in practice.
 \item At the cost of increasing  the ``overhead" $\alpha(n)$ from $O(\log n)$ to $O(\log^3 n)$, we can obtain compression procedures $E_A, E_B$ and $E_C$ that run in polynomial time. On the other hand, the decompression procedure $D$ is slower than any computable function. This is unavoidable at this level of optimality (compression at close to minimum description length) because of the existence of \emph{deep strings}. (Informally, a string $x$ is deep  if it  has a description $p$ of small length but the universal machine takes a long time to produce $x$ from $p$.)
 \item The theorem is true for any number $\ell$ of senders, where $\ell$ is an arbitrary constant. We have singled out $\ell = 3$ because this case allows us to present the main ideas of the proof in a relatively simple form.
\item Romashchenko~\cite{rom:j:slepwolf} (building on an earlier result of Muchnik~\cite{muc:j:condcomp}) has obtained a Kolmogorov complexity version of Slepian-Wolf, in which the encoding and the decoding functions use $O(\log n)$ of extra information, called \emph{help bits}.    The above theorem eliminates the help bits, and is, therefore, fully effective.  The cost is that the encoding procedure  is probabilistic and thus  there is a small error probability. The proof of Theorem~\ref{t:kolmslepwolf} is inspired from Romashchenko's approach, but  the technical machinery is quite different.

 \item The classical Slepian-Wolf theorem. can be obtained from the Kolmogorov complexity version because if $X$ is memoryless, then with probability $1-\epsilon$, $H(X) - c_\epsilon \sqrt{n} \leq C(X) \leq H(X) + c_\epsilon \sqrt{n}$, where $c_\epsilon$ is a constant that only depends on $\epsilon$.

\end{itemize}

\section{Proof sketch of Theorem~\ref{t:kolmslepwolf}}
The central piece in the proof is a certain type of bipartite graph with a low congestion property.  We recall that in a bipartite graph, the nodes are partitioned in two sets, $L$ (the left nodes) and $R$ (the right  nodes), and all edges connect a left node to a right node. We allow multiple edges between two nodes. In the graphs that we use, all  left nodes have the same degree, called the left degree.
Specifically, we use bipartite graphs $G$ with $L=\zo^{n}$, $R=\zo^m$ and with left degree $D=2^d$. We label the edges outgoing from $x \in L$ with strings $y \in \zo^d$. We typically work with a family
of graphs indexed on $n$ and such a family of graphs is
\emph{computable} if there is an algorithm that on input $(x,y)$, where
$x \in L$ and $y \in \zo^d$, outputs the $y$-th neighbor of
$x$. Some of the graphs also depend on a rational $0 < \delta < 1$. A
constructible family of graphs is \emph{explicit} if the above algorithm
runs in time $\poly(n, 1/\delta)$.

We now introduce informally the notions of a \emph{rich owner} and of a \emph{graph with the rich owner property}. Let $B \subseteq L$. The $B$-degree of a right node is the number of its neighbors that are in $B$.  Roughly speaking a left node is a rich owner with respect to $B$,  if most of its right neighbors are ``well-behaved," in the sense that their $B$-degree is not much larger than  $|B| \cdot D /|R|$, the average right degree when the left side is restricted to $B$. One particularly interesting  case, which  is used many times in the proof,   is when most of the neighbors of a left $x$ have $B$-degree $1$, \ie, when $x$ ``owns" most of its right neighbbors. A  graph has the rich owner property if,  for all $B \subseteq L$, most of the left nodes in $B$ are rich owners with respect to $B$. In the formal definition below, we replace the average right degree with a value which may look arbitrary, but since in applications, this value is approximately equal to  the average right degree, the  above intuition should be helpful.

The precise definition of rich ownership  depends on two parameters $k$ and $\delta$. 

\begin{definition}  Let $G$ be a bipartite graph as above and let $B$ be a subset of $L$. We say that
$x \in B$ is a $(k,\delta)$-rich owner with respect to $B$ if the following holds:
\begin{itemize}
\item \emph{small regime case:}  If $|B| \leq 2^k$,  then at least $1-\delta$ fraction of $x$'s neighbors  have $B$-degree equal to $1$, that is they are not shared with any other nodes in $B$. We also say that $x \in B$  owns $y$ with respect to B  if $y$ is a neighbor of $x$ and the $B$-degree of $y$ is $1$.
\item  \emph{large regime case:}  If $|B| >  2^k$, then at least a $1-\delta$ fraction of $x$'s neighbors have $B$-degree at most $(2/\delta^2) |B| \cdot D /2^k$.
\end{itemize} 
If $x$
is not a $(k, \delta)$-rich owner with respect to $B$, then it is said to be a $(k,\delta)$-poor owner with respect to $B$.
\end{definition}
\begin{definition}
\label{d:ro}
A bipartite graph $G = (L=\zo^n , R=\zo^m, E \subseteq L \times R)$ has the $(k,\delta)$-rich owner property if 
for every set $B \subseteq L$  all nodes  in $B$, except at most $\delta |B|$ of them,  
 are $(k, \delta)$-rich owners with respect to $B$.
\end{definition}

The following theorem provides the type of graph that we use. 
\begin{theorem}
\label{t:richownergraph}
For every natural numbers $n$ and $k$ and for every rational number $\delta \in (0,1]$, there exists a computable   bipartite graph $G = (L,R, E \subseteq L \times R)$ that has the $(k,\delta)$-rich property with the following parameters: $L = \zo^n$, $R = \zo^{k+ \gamma(n/\delta)}$,  left degree $D = 2^{\gamma(n/\delta)}$, 
where $\gamma(n) = O(\log n)$.

There also exists an \emph{explicit} bipartite graph with the same parameters except that the overhead is $\gamma(n) = O(\log^3 n)$.

\if01
\begin{enumerate}
\item $L = \zo^n$,
\item  $R = \zo^{k+ \gamma(n/\delta)}$,
\item left degree $D = 2^{\gamma(n/\delta)}$, 
\end{enumerate}
where $\gamma(n) = O(\log^3(n/\delta))$.
\fi
\end{theorem}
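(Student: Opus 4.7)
The plan is to use the probabilistic method. Take $G$ to be the random bipartite graph in which each $x\in L$ independently chooses $D=2^{\gamma}$ right-neighbors in $R=\zo^{k+\gamma}$ uniformly at random, with $\gamma=\gamma(n/\delta)=C\log(n/\delta)$ for a sufficiently large constant $C$. I would fix an arbitrary $B\subseteq L$ of size $s$, analyze the two regimes separately, and union-bound over all sets of each size. Since there are at most $2^{ns}$ sets of size $s$, the per-set failure probability must be driven below $2^{-\Omega(ns)}$ by taking $C$ large enough.

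For the large regime $|B|=s>2^k$, the argument is direct. For each fixed $y\in R$, the $B$-degree of $y$ is a sum of $s$ independent Bernoulli indicators each of success probability $D/|R|=2^{-k}$, so $\mathbb{E}[\deg_B(y)]=s/2^k$. A Chernoff bound yields $\Pr[\deg_B(y)>(2/\delta^2)\cdot sD/2^k]\le\exp(-\Omega(Ds/(\delta^2 2^k)))$, and after a union bound over $y$ and over $B$ one concludes with high probability that every right node has $B$-degree at most $(2/\delta^2)|B|D/2^k$ simultaneously; that immediately makes every $x\in B$ a rich owner in the large regime.

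For the small regime $|B|\le 2^k$, the target is a $(2^k,\delta^2/2)$-lossless expansion property: $|N(B)|\ge(1-\delta^2/2)|B|D$ for every such $B$. A standard double-counting step then yields at least $(1-\delta^2)|B|D$ neighbors of $B$ of $B$-degree exactly one, and Markov's inequality applied to the sum over $x\in B$ of its number of non-unique neighbors (which totals at most $\delta^2|B|D$) produces at least $(1-\delta)|B|$ nodes $x$ each having a $(1-\delta)$-fraction of unique neighbors. The lossless expansion itself I would prove by bounding the expected number of colliding edge-pairs inside $B$ and applying Chernoff to this sum, using the independence of left nodes' random neighborhoods; the resulting per-$B$ failure probability is $\exp(-\Omega(sD))$, which the union bound tolerates once $D=(n/\delta)^{\Omega(1)}$ is polynomially large.

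The main obstacle is calibrating $\gamma$ so that the small-regime analysis survives the $2^{ns}$-term union bound uniformly in $s$ all the way up to $s=2^k$, where the expected right-node $B$-degree is $\Theta(1)$ and one cannot rely on a plain expectation argument; here the polynomial slack absorbed into $\gamma$, together with the $D$-fold independence of each left node's random neighborhood, is what carries the concentration inequality. For the explicit version, I would replace the random graph by an explicit lossless expander such as the Capalbo–Reingold–Vadhan–Wigderson construction: its parameters are weaker by polylogarithmic factors, giving exactly the claimed $\gamma(n)=O(\log^3 n)$ overhead, and the large-regime bound from counting transfers unchanged.
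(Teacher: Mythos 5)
Your overall route is genuinely different from the paper's: the paper (Lemma~\ref{l:extrand} onward) builds a random graph on the \emph{small} right side $R=\zo^{k}$, shows it is an extractor simultaneously for every prefix length $k'\le k$, and then hashes each edge with primes to turn bounded $B$-degree into $B$-degree one; the hash is what produces the $O(\log(n/\delta))$ overhead in both $|R|$ and $D$. You instead work directly on $R=\zo^{k+\gamma}$ and aim at lossless expansion for all $|B|\le 2^k$, from which the double-counting and Markov step correctly delivers the small-regime rich-owner property with no hashing at all. That part is sound and arguably cleaner than the paper's, \emph{provided} the exponent used for $|R|$ strictly exceeds the exponent used for $D$ by $\Theta(\log(1/\delta))$: if both are literally the same $\gamma$ as in your opening sentence, then at $|B|=2^k$ you have $|B|D=|R|$, the expected right-load is exactly $1$, and by the birthday paradox a constant fraction of the edges collide, so $(1-\delta^2/2)$-lossless expansion fails. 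You gesture at ``polynomial slack absorbed into $\gamma$'' as the fix, which is the right instinct (the theorem's $\gamma(n)=O(\log n)$ tolerates different constants), but it should be stated.

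The genuine gap is in the large regime. You propose to union-bound over all right nodes $y$ and all sets $B$ of size $s>2^k$ a per-node Chernoff tail $\exp(-\Omega(Ds/(\delta^2 2^k)))$. For $s$ just above $2^k$ this exponent is only $\Theta(D/\delta^2)=\mathrm{poly}(n/\delta)$, while the number of sets $B$ of that size is $\binom{2^n}{s}\approx 2^{n\cdot 2^k}$, which is doubly exponential in $k$. No polynomial degree $D$ can push a per-$(B,y)$ tail below $2^{-n s}$ uniformly in $s\ge 2^k$, so the union bound does not close; the assertion that ``every right node has $B$-degree at most $(2/\delta^2)|B|D/2^k$ simultaneously'' is in fact false for the random graph. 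The correct move, and the one the paper implicitly takes via the extractor property, is to union over \emph{subsets} $A\subseteq R$ rather than single nodes: by a trivial counting argument the heavy set $A=\{y:\deg_B(y)>(2/\delta^2)|B|D/2^k\}$ always has $|A|\le(\delta^2/2)2^k$, so it suffices to show $|E(B,A)|\le \delta^2|B|D$ for every $B$ with $|B|\ge 2^k$ and every $A$ with $|A|\le(\delta^2/2)2^k$. Chernoff gives a per-$(B,A)$ failure bound of roughly $2^{-\Omega(\delta^2 sD\gamma_R)}$, and the union over $A$ contributes only $2^{O(2^k\log|R|)}\le 2^{O(ns)}$ additional terms, which the exponent absorbs for $D=\mathrm{poly}(n/\delta)$. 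Markov over $x\in B$ then finishes as in your small-regime argument. (Equivalently, one can simply prove the $(2^k,\delta^2/2)$-extractor property and read off the large-regime bound, which is exactly what the paper does.)

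For the explicit version, replacing the random graph by a Capalbo--Reingold--Vadhan--Wigderson lossless expander handles only the small regime; it says nothing about sets of size larger than $2^k$. The paper instead uses the Raz--Reingold--Vadhan extractor (combined with techniques from the cited works) precisely because an explicit extractor is needed for the large regime and the prefix structure for the small one. So the explicit part of your proposal is also incomplete as stated.
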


The graphs in Theorem~\ref{t:richownergraph} are derived from randomness extractors. The computable graph is obtained with the probabilistic method, and we sketch the construction in 
Section~\ref{s:graph}. The explicit graph relies on the extractor from~\cite{rareva:c:extractor} and uses a combination of techniques from~\cite{raz-rei:c:extcon},~\cite{cap-rei-vad-wig:c:conductors}, and~\cite{bau-zim:c:linlist}. 
\begin{figure}[h]
\hspace{13ex}
\begin{tikzpicture}[shorten >=1pt,scale=0.45,auto,node distance=3cm, transform shape,
 NodeL/.style={draw, ellipse, minimum height=14cm, minimum width =48mm},
 NodeLL/.style={draw, ellipse, minimum height=6cm, minimum width =24mm},
NodeR/.style={draw, ellipse, fill=lightgray, minimum height=1mm, minimum width =1mm},
NodeRshared/.style={draw, ellipse, fill=white, minimum height=1mm, minimum width =1mm},
NodeRR/.style={draw, ellipse,minimum height=10cm, minimum width=24mm},
 myarrow/.style={->, shorten >=1pt, thick},
 tline/.style={-,shorten >=1pt, dashed, thick}
]
 
\node[NodeL] (leftset) at (-6,0){};
\node[anchor=south] at (leftset.90) {\LARGE  $L =\zo^n$};
\node[NodeLL] (leftset) at (-6,-1){};
\node[anchor=south] at (leftset.90) {\LARGE  $B$};
\node[] (leftnode) at (-6,-1){};
\node[anchor=south] at (leftnode.90) {\LARGE $x$};

\node[](mark) at (-6.0,1.0){};
\node[](mark2) at (-6.0,-3.0){};
\node[align=center, ellipse callout, draw, callout absolute pointer= (mark.west), above right=2.5cm of mark.north west, fill=black!20] {\LARGE rich owners};
\node[align=center, ellipse callout, draw, callout absolute pointer= (mark2.west), below right=2.5cm of mark2.north west, fill=black!20] {\LARGE poor  owners};

\begin{scope}[xshift=9.5cm]
  
\node[NodeRR](bigR) at (0,0){};
\node[anchor=south] at (bigR.90) {\LARGE  $R =\zo^{k+\gamma(n/\delta)}$};

 \end{scope}

\draw[](-7.15,-2.5)--(-4.75,-2.5);
\draw[dashed](-6.9,-2.5)--(-6.9,-3.2);
\draw[dashed](-6.6,-2.5)--(-6.6,-3.6);
\draw[dashed](-6.3,-2.5)--(-6.3,-3.9);
\draw[dashed](-6.0,-2.5)--(-6.0,-4.2);
\draw[dashed](-5.7,-2.5)--(-5.7,-3.9);
\draw[dashed](-5.4,-2.5)--(-5.4,-3.6);
\draw[dashed](-5.1,-2.5)--(-5.1,-3.2);


\node[] at (-6,-1) (x){};
\node[NodeR] at (9.5,1.3) (z1){};
\node[NodeR] at (9.5,0.8) (z2){};
\node[NodeR] at (9.5,0.3) (z3){};
\node[NodeRshared] at (9.5,-0.2) (z4){};
\node[] at (-6,-1.8) (dummy){};
\path[] (x) edge [bend left=10] (z1);
\path[] (x) edge [bend left=5] (z2);
\path[] (x) edge [bend left=0] (z3);
\path[] (x) edge [bend right= 5] (z4);
\path[tline] (dummy) edge [bend right = 15] (z4);
\if01
\draw[] (-6,-1) -- (9.5,1);
\draw [] (-6,-1) -- (9.5,0.5);
\draw [] (-6,-1) -- (9.5,0);
\draw [] (-6,-1) -- (9.5,-0.5);
\draw [] (-6,-1) -- (9.5,-1);
\draw[tline] (-6,-1.8) --(9.5,-1);
\fi
\draw[](-4,-0.3) edge [bend left=5] (-4,-1.25);

\node at (-2.0, 1)(deg) {\LARGE degree $D$};
\draw[->] (deg) edge (-4,-0.8);
\end{tikzpicture}
\caption{Graph with the $(k, \delta)$ rich owner property.   If $|B| \leq 2^k$ (small regime), a left node $x$ is a rich owner with respect to $B$ if it owns $(1-\delta)$ of its neighbors;  if $|B| > 2^k$ (large regime), if $(1-\delta)$ of its neighbors have $B$-degree close to the average right $B$-degree.  For every $B \subseteq L$, $(1-\delta)$ fraction of $B$ are rich owners.In the figure, the grey neighbors are owned by $x$, and the white neighbor is not owned.}
\label{f:grichowner}
\end{figure}
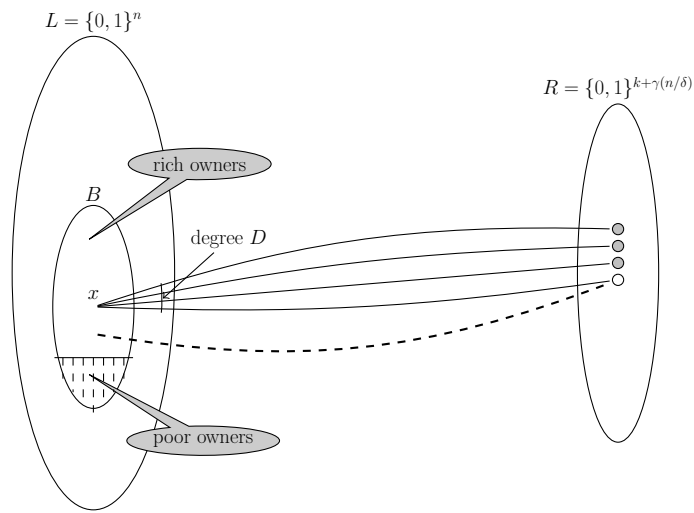

Let us proceed now to the proof sketch of Theorem~\ref{t:kolmslepwolf}. \emph{We warn the reader that for the sake of readability, we skip several technical elements. In particular, we ignore the loss of precision in $\eqp, \leqp, \geqp$, and we treat these relations as if they were $=, \leq, \geq$.}

 Recall that the input procedures $\ea, \eb$ and $\ec$ have as inputs, respectively, the pairs $(\xa,\na), (\xb,\nb), (\xc,\nc)$, where $\xa,\xb, \xc$ are $n$-bit strings, and $\na, \nb,\nc$ are natural numbers. The three encoding procedures use, respectively the graphs $\ga, \gb$ and $\gc$, which have, respectively, the $(\na+1, 1/n^2)$, $(\nb+1,1/n^2)$, $\nc+1,1/n^2)$ rich owner property. Viewing the strings $\xa,\xb, \xc$ as left nodes in the respective graphs, the encoding procedures pick $\pa,\pb, \pc$ as random neighbors of $\xa,\xb, \xc$ (see Figure~\ref{f:encode}).

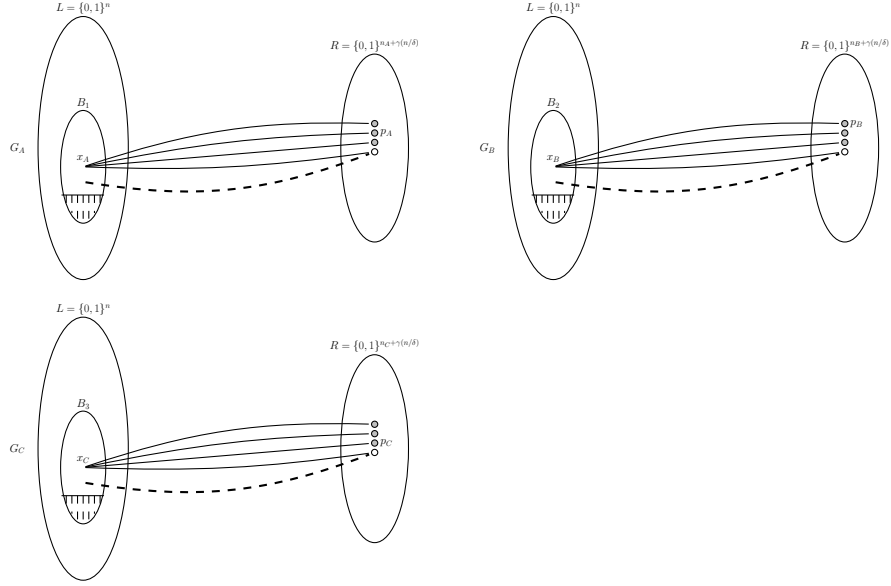
\begin{figure}[h]
\hspace{13ex}
\begin{tikzpicture}[shorten >=1pt,scale=0.25,auto,node distance=1cm, transform shape,
 NodeL/.style={draw, ellipse, minimum height=14cm, minimum width =48mm},
 NodeLL/.style={draw, ellipse, minimum height=6cm, minimum width =24mm},
NodeR/.style={draw, ellipse, fill=lightgray, minimum height=1mm, minimum width =1mm},
NodeRshared/.style={draw, ellipse, fill=white, minimum height=1mm, minimum width =1mm},
NodeRR/.style={draw, ellipse,minimum height=10cm, minimum width=36mm},
 myarrow/.style={->, shorten >=1pt, thick},
 tline/.style={-,shorten >=1pt, dashed, thick}
]
 
\node[NodeL] (leftset) at (-6,0){};
\node[anchor=south] at (leftset.90) {\LARGE  $L =\zo^n$};
\node[anchor=east] at (leftset.180) {\LARGE  {\bf $G_A$}\quad\quad};
\node[NodeLL] (leftset) at (-6,-1){};
\node[anchor=south] at (leftset.90) {\LARGE  $B_1$};
\node[] (leftnode) at (-6,-1){};
\node[anchor=south] at (leftnode.90) {\LARGE $\xa$};

\begin{scope}[xshift=9.5cm]
  
\node[NodeRR](bigR) at (0,0){};
\node[anchor=south] at (bigR.90) {\LARGE  $R =\zo^{\na+\gamma(n/\delta)}$};

 \end{scope}

\draw[](-7.15,-2.5)--(-4.75,-2.5);
\draw[dashed](-6.9,-2.5)--(-6.9,-3.2);
\draw[dashed](-6.6,-2.5)--(-6.6,-3.6);
\draw[dashed](-6.3,-2.5)--(-6.3,-3.9);
\draw[dashed](-6.0,-2.5)--(-6.0,-4.2);
\draw[dashed](-5.7,-2.5)--(-5.7,-3.9);
\draw[dashed](-5.4,-2.5)--(-5.4,-3.6);
\draw[dashed](-5.1,-2.5)--(-5.1,-3.2);


\node[] at (-6,-1) (x){};
\node[NodeR] at (9.5,1.3) (z1){};
\node[NodeR] at (9.5,0.8) (z2){};
\node[anchor=west] at (z2) {\LARGE \,\,$\pa$};
\node[NodeR] at (9.5,0.3) (z3){};
\node[NodeRshared] at (9.5,-0.2) (z4){};
\node[] at (-6,-1.8) (dummy){};
\path[] (x) edge [bend left=10] (z1);
\path[] (x) edge [bend left=5] (z2);
\path[] (x) edge [bend left=0] (z3);
\path[] (x) edge [bend right= 5] (z4);
\path[tline] (dummy) edge [bend right = 15] (z4);
\if01
\draw[] (-6,-1) -- (9.5,1);
\draw [] (-6,-1) -- (9.5,0.5);
\draw [] (-6,-1) -- (9.5,0);
\draw [] (-6,-1) -- (9.5,-0.5);
\draw [] (-6,-1) -- (9.5,-1);
\draw[tline] (-6,-1.8) --(9.5,-1);
\fi


 \begin{scope}[xshift=25cm]
\node[NodeL] (leftset) at (-6,0){};
\node[anchor=south] at (leftset.90) {\LARGE  $L =\zo^n$};
\node[anchor=east] at (leftset.180) {\LARGE  {\bf $G_B$}\quad\quad};
\node[NodeLL] (leftset) at (-6,-1){};
\node[anchor=south] at (leftset.90) {\LARGE  $B_2$};
\node[] (leftnode) at (-6,-1){};
\node[anchor=south] at (leftnode.90) {\LARGE $\xb$};

\begin{scope}[xshift=9.5cm]
  
\node[NodeRR](bigR) at (0,0){};
\node[anchor=south] at (bigR.90) {\LARGE  $R =\zo^{\nb+\gamma(n/\delta)}$};

 \end{scope}

\draw[](-7.15,-2.5)--(-4.75,-2.5);
\draw[dashed](-6.9,-2.5)--(-6.9,-3.2);
\draw[dashed](-6.6,-2.5)--(-6.6,-3.6);
\draw[dashed](-6.3,-2.5)--(-6.3,-3.9);
\draw[dashed](-6.0,-2.5)--(-6.0,-4.2);
\draw[dashed](-5.7,-2.5)--(-5.7,-3.9);
\draw[dashed](-5.4,-2.5)--(-5.4,-3.6);
\draw[dashed](-5.1,-2.5)--(-5.1,-3.2);


\node[] at (-6,-1) (x){};
\node[NodeR] at (9.5,1.3) (z1){};
\node[anchor=west] at (z1) {\LARGE \,\,$\pb$};
\node[NodeR] at (9.5,0.8) (z2){};
\node[NodeR] at (9.5,0.3) (z3){};
\node[NodeRshared] at (9.5,-0.2) (z4){};
\node[] at (-6,-1.8) (dummy){};
\path[] (x) edge [bend left=10] (z1);
\path[] (x) edge [bend left=5] (z2);
\path[] (x) edge [bend left=0] (z3);
\path[] (x) edge [bend right= 5] (z4);
\path[tline] (dummy) edge [bend right = 15] (z4);
\if01
\draw[] (-6,-1) -- (9.5,1);
\draw [] (-6,-1) -- (9.5,0.5);
\draw [] (-6,-1) -- (9.5,0);
\draw [] (-6,-1) -- (9.5,-0.5);
\draw [] (-6,-1) -- (9.5,-1);
\draw[tline] (-6,-1.8) --(9.5,-1);
\fi
\end{scope}


\begin{scope}[yshift=-16cm]
\node[NodeL] (leftset) at (-6,0){};
\node[anchor=south] at (leftset.90) {\LARGE  $L =\zo^n$};
\node[anchor=east] at (leftset.180) {\LARGE  {\bf $G_C$}\quad\quad};
\node[NodeLL] (leftset) at (-6,-1){};
\node[anchor=south] at (leftset.90) {\LARGE  $B_3$};
\node[] (leftnode) at (-6,-1){};
\node[anchor=south] at (leftnode.90) {\LARGE $\xc$};

\begin{scope}[xshift=9.5cm]
  
\node[NodeRR](bigR) at (0,0){};
\node[anchor=south] at (bigR.90) {\LARGE  $R =\zo^{\nc+\gamma(n/\delta)}$};

 \end{scope}

\draw[](-7.15,-2.5)--(-4.75,-2.5);
\draw[dashed](-6.9,-2.5)--(-6.9,-3.2);
\draw[dashed](-6.6,-2.5)--(-6.6,-3.6);
\draw[dashed](-6.3,-2.5)--(-6.3,-3.9);
\draw[dashed](-6.0,-2.5)--(-6.0,-4.2);
\draw[dashed](-5.7,-2.5)--(-5.7,-3.9);
\draw[dashed](-5.4,-2.5)--(-5.4,-3.6);
\draw[dashed](-5.1,-2.5)--(-5.1,-3.2);


\node[] at (-6,-1) (x){};
\node[NodeR] at (9.5,1.3) (z1){};
\node[NodeR] at (9.5,0.8) (z2){};
\node[NodeR] at (9.5,0.3) (z3){};
\node[anchor=west] at (z3) {\LARGE \,\,$\pc$};
\node[NodeRshared] at (9.5,-0.2) (z4){};
\node[] at (-6,-1.8) (dummy){};
\path[] (x) edge [bend left=10] (z1);
\path[] (x) edge [bend left=5] (z2);
\path[] (x) edge [bend left=0] (z3);
\path[] (x) edge [bend right= 5] (z4);
\path[tline] (dummy) edge [bend right = 15] (z4);
\if01
\draw[] (-6,-1) -- (9.5,1);
\draw [] (-6,-1) -- (9.5,0.5);
\draw [] (-6,-1) -- (9.5,0);
\draw [] (-6,-1) -- (9.5,-0.5);
\draw [] (-6,-1) -- (9.5,-1);
\draw[tline] (-6,-1.8) --(9.5,-1);
\fi
\end{scope}

\end{tikzpicture}
\caption{The encoding/decoding procedures. The senders use graphs $\ga, \gb,\gc$ with the rich owner property, and then encode $\xa, \xb, \xc$ by random neighbors $\pa,\pb,\pc$. The receiver uses $B_1, B_2, B_3$ in the small regime in the respective graphs, for which $\xa, \xb, \xc$ are rich owners, and reconstructs $\xa,\xb,\xc$ as the unique neighbors of $\pa, \pb, \pc$ in $B_1,B_2, B_3$.}
\label{f:encode}

\end{figure}

We need to show that if $\na,\nb,\nc$ satisfy the inequalities~(\ref{e:constraint}), then it is possible to reconstruct $(\xa,\xb,\xc)$ from $(\pa,\pb,\pc)$ with high probability (over the random choice of 
 $(\pa,\pb,\pc)$). The general idea is to identify computable enumerable subsets $B_1, B_2, B_3$ of left nodes in the three graphs, which are in the ``small regime,"  and which contain respectively $\xa, \xb, \xc$ as rich owners. Then $\pa$ has $\xa$ as its single neighbor in $B_1$, and therefore $\xa$ can be obtained from $\pa$ by enumerating the elements of $B_1$ till we find one that has $\pa$ as a neighbor  ($\xb, \xc$ are obtained similarly).

We shall assume first that the decoding procedure $D$ knows the $7$-tuple $(C(x_V) \mid V \subseteq \{A,B,C\}, V \not= \emptyset)$, \ie, the complexity profile of $(\xa,\xb, \xc)$.

The proof has an inductive character, so let us begin by analyzing the case when there is a single sender, then when there are two senders, and finally when there are three senders.
\medskip

\textbf{1 Sender.}  We show how to reconstruct $\xa$ from $\pa$, assuming $C(\xa) \leq \na$. Let
\[
B_1 = \{x \in \zo^n \mid C(x) \leq C(\xa)\}.
\]
Since the size of $B_1$ is bounded by $2^{C(\xa)+1} \leq 2^{\na+1}$, it follows that $B_1$ is in the small regime in $\ga$. The number of poor owners with respect to $B_1$ in $\ga$ is at most $(1/n^2) \cdot 2^{C(\xa)+1} \approx  2^{C(\xa) - 2\log n}$, and it can be shown that any poor owner can be described by $C(\xa) - \Omega(\log n)$ bits (essentially by its rank in some fixed standard ordering of the set of poor owners). Therefore the complexity of a poor owner is strictly less than $C(\xa)$ and thus $\xa$ is a rich owner with respect to $B_1$, as needed to enable its reconstruction from $\pa$.
\medskip

\textbf{2 Senders.}  We show how to reconstruct $\xa, \xb$ from $\pa, \pb$, assuming $C(\xa \mid \xb) \leq \na, C(\xb \mid \xa) \leq \nb, C(\xa, \xb) \leq \na + \nb$.

If $\na \geq C(\xa)$, then $\xa$ can be reconstructed from $\pa$ as in the \emph{1 Sender} case. Next, since $\nb \geq C(\xb \mid \xa)$, $\xb$ can be reconstructed from $\xa$ and $\pb$, similar to the \emph{1 Sender} case.

So let us assume that $C(\xa) > \na$.  Let
\[
B_2 = \{x \in \zo^n \mid C(x \mid \pa) \leq C(\xb \mid \pa)\}.
\]
We  show below that (1) $B_2$ is in the small regime in $\gb$, and (2) that it can be effectively enumerated.  Since $\xb \in B_2$, and since it is a rich owner with respect to $B_2$ (by a similar argument with the one used for $\xa$ in the \emph{1 Sender} case), this implies that $\xb$ can be reconstructed from  $\pa, \pb$, and next $\xa$ can be reconstructed from $\xb$ and $\pa$,  as in the \emph{1 Sender} case (because $\na \geq C(\xa \mid \xb)$). 

It remains to prove the assertions (1) and (2) claimed above. We establish the following fact.
\begin{fact} 
\label{f:f1}
\begin{enumerate}
\item[(a)] $C(\pa) \eqp \na$,
\item[(b)]  $C(\pa, \xb) \eqp C(\xa, \xb)$.
\item[(c)] $C(\xb \mid \pa) \eqp C(\xa, \xb) - \na$.
\item[(d)] $C(\xb \mid  \pa) \leqp \nb$.
\end{enumerate}
\end{fact}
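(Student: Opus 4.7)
The plan is to establish (b) first, then combine it with an independent upper bound on $C(\xb \mid \pa)$ derived from the large regime of $\ga$ to pin down (a) and (c) simultaneously; (d) is then immediate from (c) and the 2-sender assumption $C(\xa, \xb) \leqp \na + \nb$. The easy direction of (b), $C(\pa, \xb) \leqp C(\xa, \xb)$, is immediate from $C(\pa \mid \xa) \leqp 0$, since $\pa$ is determined by $\xa$ and an $O(\log n)$-bit edge label in $\ga$. For the reverse direction I would mimic the 1-sender argument relativized to $\xb$: set $B = \{x \in \zo^n : C(x \mid \xb) \leq C(\xa \mid \xb)\}$, so $|B| \leq 2^{\na+1}$ (using $C(\xa \mid \xb) \leq \na$), placing $B$ in the small regime of $\ga$. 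Poor owners in $B$ are enumerable given $\xb$ and $C(\xa \mid \xb)$ and total at most $|B|/n^2$, so each is describable given $\xb$ with strictly fewer than $C(\xa \mid \xb)$ bits; hence $\xa$ must be a rich owner. With probability $\geq 1 - 1/n^2$ over the random edge label, $\pa$ is then uniquely owned by $\xa$ in $B$, and enumerating $B$ given $(\xb, \pa)$ recovers $\xa$, yielding $C(\xa \mid \xb, \pa) \leqp 0$ and thus $C(\xb, \pa) \eqp C(\xa, \xb, \pa) \eqp C(\xa, \xb)$.

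For the upper bound $C(\xb \mid \pa) \leqp C(\xa, \xb) - \na$, I would re-use $\ga$ in the large regime. Define $B' = \{x \in \zo^n : C(x) \leq C(\xa)\}$; since the subcase under analysis assumes $C(\xa) > \na$, we have $|B'| > 2^{\na+1}$, so $B'$ sits in the large regime. The same poor-owner enumeration makes $\xa$ a rich owner in $B'$, so with probability $\geq 1 - 1/n^2$ the node $\pa$ has $B'$-degree at most $(2 n^4)|B'| D / 2^{\na+1} = 2^{C(\xa) - \na + O(\log n)}$, using $D = 2^{O(\log n)}$. Enumerating $B'$ (using $C(\xa)$ as an $O(\log n)$-bit parameter) and retaining only neighbors of $\pa$ produces a list containing $\xa$ of size $2^{C(\xa) - \na + O(\log n)}$, so $C(\xa \mid \pa) \leqp C(\xa) - \na$. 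The chain rule then gives $C(\xb \mid \pa) \leq C(\xa, \xb \mid \pa) \leqp (C(\xa) - \na) + C(\xb \mid \xa) \eqp C(\xa, \xb) - \na$.

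Combining (b) with the chain rule, $C(\pa) + C(\xb \mid \pa) \eqp C(\pa, \xb) \eqp C(\xa, \xb)$. Since we also have $C(\pa) \leqp \na$ (from $|\pa| = \na + O(\log n)$) and $C(\xb \mid \pa) \leqp C(\xa, \xb) - \na$ (just shown), and the two upper bounds sum to $C(\xa, \xb)$, both must in fact be tight: $C(\pa) \eqp \na$ gives (a), $C(\xb \mid \pa) \eqp C(\xa, \xb) - \na$ gives (c), and (d) follows from (c) and the assumption $C(\xa, \xb) \leqp \na + \nb$. The main technical obstacle is the large-regime step: one must verify that the $B'$-degree bound together with the cost of describing and enumerating $B'$ yields an $O(\log n)$ description of $\xa$ given $\pa$, which depends on the specific parameters $\delta = 1/n^2$ and $D = 2^{O(\log n)}$ of $\ga$ conspiring to keep the overhead inside the $\leqp$ slack, and on the two failure probabilities (small- and large-regime) summing to $o(1)$.
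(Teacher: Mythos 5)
Your proposal is correct and rests on exactly the same two graph-theoretic ingredients as the paper: the small-regime argument relativized to $\xb$ (for the nontrivial direction of (b)) and the large-regime argument on $\ga$ to bound how many $B'$-neighbors $\pa$ can have. The only difference is one of bookkeeping. The paper uses the large-regime step to conclude $C(\pa)\geqp\na$ directly (since $\xa$ is recoverable from $\pa$ and its rank among $\pa$'s $B_1$-neighbors, giving $C(\xa)\leqp C(\pa)+(C(\xa)-\na)$), and then derives (c) from (a), (b), and the chain rule. You instead convert the same large-regime fact into $C(\xa\mid\pa)\leqp C(\xa)-\na$, push it through to an upper bound $C(\xb\mid\pa)\leqp C(\xa,\xb)-\na$, and then recover (a) and (c) simultaneously via a tightness-of-sums argument ($C(\pa)+C(\xb\mid\pa)\eqp C(\xa,\xb)$ with both summands bounded above by quantities that sum to $C(\xa,\xb)$). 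This detour is slightly longer than the paper's direct route to (a) but is logically equivalent; nothing is gained or lost except a couple of chain-rule applications.
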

\begin{proof}
(a) By the same argument used above, $\xa$ is still a rich owner with respect to $B_1$, but $B_1$ is now in the large regime. This implies that with probability $1- (1/n^2)$, $\pa$ has, for some constant $c$, $2^{C(\xa )- \na + c \log n}$ neighbors in $B_1$, one of them being $\xa$. The string $\xa$ can be constructed from $\pa$ and its rank among $\pa$'s  neighbors in $B_1$. This implies $C(\xa) \leqp  C(\pa) + (C(\xa) - \na)$, and thus, $C(\pa) \geqp \na$. Since $| \pa | \leqp \na$,  it follows that $C(\pa) \leqp \na$, and therefore, $C(\pa) \eqp \na$. 
\smallskip

(b)  The ``$\leqp$" inequality holds because $\pa$ can be obtained from $\xa$ and $O(\log n)$ bits which describe the edge $(\xa,\pa)$ in $\ga$. For the ``$\geqp$" inequality, let
\[
B_1' = \{x \in \zo^n \mid C(x\mid \xb) \leq C(\xa \mid \xb)\}.
\]
$B_1'$ is in the small regime in $\ga$ (because $|B_1'| \leq 2^{C(\xa\mid \xb)+1} \leq 2^{\na+1}$), and $\xa$ is a rich owner with respect to $B_1'$ (because, as we have argued above,  poor owners, being few,  have complexity conditioned by $\xb$, less than $C(\xa  \mid \xb)$). So, $\xa$ can be constructed from $\xb$ (which is needed for the enumeration of $B_1'$)  and $\pa$, and therefore, $C(\xa,\xb) \leq C(\pa, \xb)$.
\smallskip

(c)  $C(\xb \mid \pa) \eqp C(\pa, \xb) - C(\pa) \eqp C(\xa,\xb) - \na$, by using the chain rule, and (a) and (b).
\smallskip

(d) $C(\xb \mid \pa) \eqp C(\xa,\xb)- \na \leqp (\na + \nb) - \na = \nb$, by using (c) and the hypothesis.
\end{proof}
Now the assertions (1) and (2) follow, because by  Fact~\ref{f:f1}, (d), $B_2$ is in the small regime, and  by  Fact~\ref{f:f1} (c), the decoding procedure can enumerate $B_1$ since it knows $C(\xa,\xb)$ and $\na$.

Finally, we move to the case of three senders.
\smallskip

\textbf{3 Senders.} This is the case stated in Theorem~\ref{t:kolmslepwolf}. We show how to reconstruct $\xa,\xb,\xc$ from $\pa,\pb,\pc$,  if $\na,\nb, \nc$ satisfy the inequalities~(\ref{e:constraint}).  We can actually assume that $C(\xa) > \na, C(\xb) > \nb, C(\xc)> \nc$, because otherwise, if for example $C(\xa) \leq \na$, then $\xa$ can be reconstructed from $\pa$ as in the \emph{1 Sender} case, and we have reduced to the case of two senders. As in the \emph{2 Senders} case, it can be shown that $C(\pa) \eqp \na,  C(\pb) \eqp \nb, C(\pc) \eqp \nc$.

There are two cases to analyze.
\smallskip

\textbf{Case 1.} $C(\xb \mid \pa) \leq \nb$ or $C(\xc \mid \pa) \leq \nc$.   Suppose the first relation holds. Then $\xb$ can be reconstructed from $\pa, \pb$, by taking the small regime set for $\gb$,
\[
B_2' = \{x \in \zo^n \mid C(x \mid \pa) \leq \nb\},
\]
for which $\xb$ is a rich owner, and,  therefore, with high probability owns $\pb$. In this way, we reduce to the \emph{2 Senders} case.
\smallskip

\textbf{Case 2.} $C(\xb \mid \pa) >  \nb$ and $C(\xc \mid \pa) >  \nc$.  We show the following fact.
\begin{fact}
\label{f:f2}
\begin{enumerate}
\item[(a)]  $C(\xb \mid \xc, \pa) \leqp \nb$ and $C(\xc \mid \xb, \pa) \leqp \nc$ ,
\item[(b)]  $C(\xb,\xc \mid \pa) \leqp \nb + \nc$.
\end{enumerate}
\end{fact}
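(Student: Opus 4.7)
My plan is to derive (b) first and then get (a) for free by a chain-rule decomposition that exploits the Case 2 hypotheses.

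\textbf{Step 1: Lift Fact~\ref{f:f1}(b) to three strings.} The key intermediate claim is $C(\pa,\xb,\xc) \eqp C(\xa,\xb,\xc)$. The ``$\leqp$'' direction is immediate, since $\pa$ is obtainable from $\xa$ plus the $O(\log n)$-bit edge label in $\ga$. For ``$\geqp$'', I would use the set
\[
B_1' = \{x \in \zo^n \mid C(x \mid \xb,\xc) \leq C(\xa \mid \xb,\xc)\},
\]
which has size $\leq 2^{C(\xa \mid \xb,\xc)+1} \leq 2^{\na+1}$ by the hypothesis $\na \geq C(\xa \mid \xb,\xc)$, so $B_1'$ is in the small regime of $\ga$. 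By the standard counting argument already used in the 1 Sender case (poor owners relative to $B_1'$ can be enumerated conditionally on $(\xb,\xc)$, so each has conditional complexity strictly less than $C(\xa \mid \xb,\xc)$), $\xa$ is a rich owner w.r.t.\ $B_1'$, hence reconstructible from $\pa$ and $(\xb,\xc)$. This gives $C(\xa \mid \xb,\xc) \leqp C(\pa \mid \xb,\xc)$ and therefore $C(\xa,\xb,\xc) \leqp C(\pa,\xb,\xc)$.

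\textbf{Step 2: Derive (b) from the chain rule.} Using $C(\pa) \eqp \na$ (already noted at the start of the 3 Senders analysis, by the same argument as Fact~\ref{f:f1}(a)), the chain rule yields
\[
C(\xb,\xc \mid \pa) \eqp C(\pa,\xb,\xc) - C(\pa) \eqp C(\xa,\xb,\xc) - \na \leqp (\na+\nb+\nc) - \na = \nb+\nc,
\]
where the last inequality uses constraint~(\ref{e:constraint}) for $V=\{A,B,C\}$. This is exactly (b).

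\textbf{Step 3: Extract (a) using the Case 2 hypotheses.} Apply the chain rule in the other order:
\[
C(\xb,\xc \mid \pa) \eqp C(\xc \mid \pa) + C(\xb \mid \xc,\pa) \quad\text{and}\quad C(\xb,\xc \mid \pa) \eqp C(\xb \mid \pa) + C(\xc \mid \xb,\pa).
\]
Combined with (b), these give $C(\xb \mid \xc,\pa) \leqp (\nb+\nc) - C(\xc \mid \pa)$ and $C(\xc \mid \xb,\pa) \leqp (\nb+\nc) - C(\xb \mid \pa)$. The Case 2 assumptions $C(\xb \mid \pa) > \nb$ and $C(\xc \mid \pa) > \nc$ then yield $C(\xb \mid \xc,\pa) \leqp \nb$ and $C(\xc \mid \xb,\pa) \leqp \nc$, which is (a).

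\textbf{Main obstacle.} The only nontrivial ingredient is Step~1, and within it, the claim that $\xa$ remains a rich owner with respect to the conditionally-defined set $B_1'$; everything else is a bookkeeping exercise with the chain rule and the given constraints. The delicate point is that the rich-owner property is wielded relative to $(\xb,\xc)$ as side information, so one must verify that the poor-owner counting argument still produces descriptions of length $< C(\xa \mid \xb,\xc)$ \emph{conditional on} $(\xb,\xc)$; this works because the set $B_1'$ itself is enumerable from $(\xb,\xc)$ together with the number $C(\xa \mid \xb,\xc)$, which costs only $O(\log n)$ bits.
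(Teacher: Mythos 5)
Your proof is correct, and it lands on the same algebraic skeleton as the paper's, but you route through the argument differently and do one piece of unnecessary work. The paper proves (a) and (b) as two separate direct chain-rule computations: for (a) it notes $C(\xc,\pa) \eqp C(\pa) + C(\xc\mid\pa) \geqp \na + \nc$ from the Case~2 hypothesis and $C(\pa)\eqp\na$, then bounds
$C(\xb\mid\xc,\pa) \eqp C(\xb,\xc,\pa) - C(\xc,\pa) \leqp C(\xa,\xb,\xc) - C(\xc,\pa) \leqp (\na+\nb+\nc) - (\na+\nc) = \nb$;
and for (b) it uses the analogous shorter chain $C(\xb,\xc\mid\pa) \eqp C(\xb,\xc,\pa) - C(\pa)$. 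You instead establish (b) first and then peel off (a) by a second application of the chain rule plus the Case~2 lower bounds. Both approaches consume exactly the same inputs (the chain rule, $C(\pa)\geqp\na$, the constraint for $V=\{A,B,C\}$, the Case~2 hypotheses, and the trivial $C(\xb,\xc,\pa)\leqp C(\xa,\xb,\xc)$), so they are morally identical; yours simply factors the calculation through (b) rather than repeating it.

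The one place you overshoot is Step~1: the rich-owner argument with $B_1'$ establishes the two-sided relation $C(\pa,\xb,\xc)\eqp C(\xa,\xb,\xc)$, but only the elementary upper bound $C(\pa,\xb,\xc)\leqp C(\xa,\xb,\xc)$ (from $\pa$ being computable from $\xa$ and the $O(\log n)$-bit edge label) is actually used in Steps~2 and~3, since all the conclusions of Fact~\ref{f:f2} are one-sided $\leqp$ bounds. The lower bound $C(\pa,\xb,\xc)\geqp C(\xa,\xb,\xc)$, which is the hard direction requiring the set $B_1'$ and the poor-owner counting, is never needed here. (That kind of two-sided statement \emph{is} needed in Fact~\ref{f:f1}(b), because Fact~\ref{f:f1}(c) is an $\eqp$; in Fact~\ref{f:f2} you can drop it.) Apart from this harmless detour your argument is sound, and your care about the poor-owner enumeration being relative to $(\xb,\xc)$ is the right thing to check if you did need that direction.
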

\begin{proof}
(a) First note that $C(\xc,\pa) \eqp C(\pa) + C(\xc \mid \pa) \geqp \na + \nc$.  Then
\[
\begin{array}{ll}
C(\xb \mid \xc, \pa) &\eqp C(\xb,\xc,\pa)- C(\xc, \pa) \\
&\leqp C(\xa,\xb,\xc) - C(\xc,\pa) \\
&\leqp (\na+\nb+\nc) - (\na+\nc) = \nb.
\end{array}
\]
The other relation is shown in the obvious similar way.
\smallskip

(b)
\[
\begin{array}{ll}
C(\xb, \xc \mid \pa) &\eqp C(\xb,\xc,\pa)- C(\pa) \\
&\leqp C(\xa,\xb,\xc) - C(\pa) \\
&\leqp (\na+\nb+\nc) - \na =  \nb + \nc.
\end{array}
\]
\end{proof}
Fact~\ref{f:f2} shows that,  given $\pa$, the complexity profile of $\xb, \xc$ satisfies the requirements for the \emph{2 Senders} case, and therefore these two strings can be reconstructed from $\pa, \pb, \pc$. Next, since $\na \geq C(\xa \mid \xb, \xc)$, $\xa$ can be reconstructed from $\pa, \xb, \xc$, as in the \emph{1 Sender} case. 

Thus,  both in  Case 1 (which actually consists of two subcases Case 1.1 and Case 1.2)  and in Case 2, $\xa,\xb, \xc$ can be reconstructed. There is still a problem: the decoding procedure needs to know which of the cases actually holds true. In the reference~\cite{zim:c:kolmslepianwolf}, it is shown how to determine which case holds true, but here  we present a solution that avoids this.

The decoding procedure launches parallel subroutines according to all three possible cases. The subroutine that works in the scenario that is correct  produces $(\xa, \xb, \xc)$. The subroutines that work with incorrect scenarios may produce other strings, or may not even halt.  How can this troublesome situation be solved? Answer: by hashing. The senders, in addition to $\pa,\pb,\pc$, use a hash function $h$, and send $h(\xa), h(\xb), h(\xc)$. The decoding procedure, whenever one of the parallel subroutines outputs a $3$-tuple, checks if the hash values of the tuple match $(h(\xa), h(\xb), h(\xc))$, and stops and prints that output the first time there is a match. In this way, the decoding procedure will produce with high probability $(\xa, \xb, \xc)$. 

\textbf{Hashing.} For completeness, we present one way of doing the hashing. By the Chinese Remainder Theorem, if $u_1$ and $u_2$ are $n$-bit numbers (in binary notation), then $u_1 \bmod p = u_2 \bmod p$ for at most $n$ prime numbers $p$. Suppose there are $s$ numbers $u_1, \ldots, u_s$, having length $n$ in binary notation and we want to distinguish the hash value of $u_1$ from the hash values of $u_2, \ldots, u_s$ with probability $1-\epsilon$.  Let $t= (1/\epsilon) sn$ and consider the first $t$ prime numbers $p_1, \ldots, p_t$. Pick $i$ randomly in $\{1, \ldots, t\}$ and define $h(u) = (p_i, u \bmod p_i)$. This \emph{isolates} $u_1$ from $u_2, \ldots, u_s$, in the sense that, with probability $1-\epsilon$, $h(u_1)$ is different from any of $h(u_2), \ldots, h(u_s)$. Note that the length of $h(u)$ is $O(\log n + \log s + \log(1/\epsilon))$.  In our application above, $s=3$, corresponding to the three parallel subroutines, and $\epsilon$ can be taken to be $1/n^2$, and thus the overhead introduced by hashing is only $O(\log n)$ bits. 

\textbf{Removing the assumption that the decoding procedure knows the inputs' complexity profile.} So far, we have assumed that the decoding procedure knows the complexity profile of $\xa, \xb, \xc$. This assumption is lifted using a hash function $h$, akin to what we did above to handle the various cases for \emph{3 Senders}. The complexity profile $(C(x_V) \mid V \subseteq \{A,B,C\}, V \not= \emptyset\}$ is a $7$-tuple, with all components bounded by $O(n)$. The decoding procedures launches $O(n^7)$ subroutines performing the decoding operation with known complexity profile, one for each possible value of the complexity profile. The subroutine using the correct value of the complexity profile will output $(\xa,\xb,\xc)$ with high probability, while the other ones may produce different $3$-tuples, or may not even halt. Using the hash values $h(\xa), h(\xb), h(\xc)$ (transmitted by senders together with $\pa, \pb, \pc$), the decoder can identify the correct subroutine in the same way as presented above. The overhead introduced by hashing is $O(\log n)$.

\section{Constructing graphs with the rich owner property}
\label{s:graph}

We sketch the construction needed for the \emph{computable} graph in Theorem~\ref{t:richownergraph}.  Recall that we use bipartite graphs of the form $G = (L=\zo^n, R=\zo^m, E\subseteq L \times R)$, in which every left node has degree $D=2^d$, and for every $x \in L$, the edges outgoing from $x$ are labeled with strings from $\zo^d$.  The construction relies on randomness extractors, which have been studied extensively in computational complexity and the theory of pseudorandom objects. A graph of the above type is said to be a $(k, \epsilon)$ extractor if for every $B \subseteq L$ of size $|B| \geq 2^k$ and for every $A \subseteq R$,
 \begin{equation}
 \label{e:eq1}
 \bigg | \frac{|E(B,A)|}{|B| \cdot D} - \frac{|A|}{|R|} \bigg | \leq \epsilon,
 \end{equation}
 where $|E(B,A)|$ is the number of edges between vertices in $B$ and vertices in $A$. In a $(k,\epsilon)$ extractor, any subset $B$ of left nodes of size at least $2^k$, ``hits" any set $A$ of right nodes like a random function: The fraction of edges that leave $B$ and land in $A$ is close to the density of $A$ among the set of right nodes. It is not hard to show that this implies the rich owner property in the large regime. To handle the small regime, we need graphs that maintain the extractor property when we consider prefixes of right nodes. 
  Given a bipartite graph $G$ as above and $m' \leq m$, the $m'$-prefix graph $G'$ is obtained from $G$ by  merging right nodes that have the same prefix of length $m'$. More formally, $G' = (L =\zo^n, R' =\zo^{m'}, E' \subseteq L \times R')$ and $(x,z') \in E'$ if and only if  $(x,z) \in E$ for some extension $z$ of $z'$. Recall that we allow multiple edges between two nodes, and therefore the merging operation does not decrease the degree of left nodes. 
 \begin{lemma}
 \label{l:extrand}
  For every $k \leq n$, and every $\epsilon > 0$, there exists a constant $c$ and a computable graph $G = (L = \zo^n, R=\zo^{k}, E \subseteq L \times R)$ with left degree $D =  cn/\epsilon^2$  such that for every $k' \leq k$, the $k'$-prefix graph $G' = (L =\zo^n, R' = \zo^{k'}, E' \subseteq L \times R')$  is a $(k', \epsilon)$ extractor.
 \end{lemma}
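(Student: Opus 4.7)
The plan is to construct $G$ by the probabilistic method: for each left node $x\in\zo^n$ and each label $y\in\zo^d$, choose the endpoint of the edge $(x,y)$ uniformly and independently at random in $R=\zo^k$. These $2^n\cdot D$ independent uniform choices define a random bipartite graph, and I claim that for $D\ge c\,n/\epsilon^2$ with a suitable constant $c$, with positive probability the resulting graph satisfies the conclusion of the lemma simultaneously for every prefix length $k'\le k$.

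The key observation is that in the $k'$-prefix graph, each edge from $x$ has its endpoint distributed uniformly in $\zo^{k'}$ (the $k'$-prefix of a uniform element of $\zo^k$ is uniform in $\zo^{k'}$), and these endpoints are still mutually independent across different edges. Fix $k'\le k$, a set $B\subseteq L$ with $|B|=2^{k'}$, and a set $A\subseteq\zo^{k'}$. The quantity $|E'(B,A)|$ is a sum of $|B|\cdot D = 2^{k'}D$ independent Bernoulli indicators, each with success probability $|A|/2^{k'}$, so by the Chernoff/Hoeffding bound
\[
\Pr\Bigl[\bigl||E'(B,A)|/(|B|D)-|A|/2^{k'}\bigr|>\epsilon\Bigr]\le 2\exp(-2\epsilon^2\cdot 2^{k'}\cdot D).
\]
I then union-bound over all choices of $(k',B,A)$. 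There are at most $k+1\le n+1$ values of $k'$, at most $\binom{2^n}{2^{k'}}\le 2^{n\cdot 2^{k'}}$ choices of $B$, and $2^{2^{k'}}$ choices of $A$. Requiring the total bound to be less than $1$ reduces to a condition of the form $2\epsilon^2 D/\ln 2 > n+1+O(\log n/2^{k'})$, which is satisfied once $D=cn/\epsilon^2$ for a sufficiently large absolute constant $c$. This produces a single graph that simultaneously satisfies the extractor inequality~(\ref{e:eq1}) for every $k'\le k$, every $B$ of size exactly $2^{k'}$, and every $A\subseteq\zo^{k'}$.

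To upgrade from $|B|=2^{k'}$ to $|B|\ge 2^{k'}$, I use a standard averaging trick: a set $B$ with $|B|\ge 2^{k'}$ can be written as a disjoint union of sets of size $2^{k'}$ (possibly with a leftover piece, which is absorbed into one of them and contributes a negligible error since we can always adjoin it to another block), and both $|E'(B,A)|/(|B|D)$ and the constant $|A|/2^{k'}$ are convex combinations of the per-block quantities, so the per-block $\epsilon$ bound passes to $B$. Finally, computability is obtained by brute force: for fixed $n,k,\epsilon$ there are only finitely many candidate graphs of the required shape, and the extractor-with-prefixes property is effectively checkable, so an algorithm can enumerate graphs until it finds one that works; the probabilistic argument guarantees that such a graph exists.

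The main obstacle is getting the union bound to close uniformly across all prefix levels $k'\le k$ with a single linear-in-$n$ left degree: the number of ``bad events'' at level $k'$ is doubly exponential in $k'$, which is why the Chernoff bound must buy back a factor of $2^{k'}$ in the exponent (this is exactly what happens because $|B|=2^{k'}$). The calculation is tight in the sense that the required $D$ scales as $n/\epsilon^2$, matching the statement of the lemma; the additional factor of $k+1\le n+1$ from union-bounding over $k'$ is absorbed harmlessly into the constant $c$.
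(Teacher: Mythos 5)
Your proposal is essentially the same argument as the paper's own proof: a random left-regular bipartite graph obtained by choosing each edge endpoint uniformly and independently in $\zo^k$, a Chernoff bound giving failure probability $2^{-\Omega(2^{k'}D\epsilon^2)}$ for a fixed triple $(k',B,A)$, a union bound over $k'$, the $\binom{2^n}{2^{k'}}$ choices of $B$, and the $2^{2^{k'}}$ choices of $A$, with $D=cn/\epsilon^2$ closing the bound, followed by brute-force search for computability. Your extra remark on passing from sets $B$ of size exactly $2^{k'}$ to sets of size at least $2^{k'}$ is the standard monotonicity step the paper leaves implicit, though as phrased the ``absorb the leftover into a block'' device does not quite give disjoint blocks of the exact size $2^{k'}$; the clean way is to average $|E'(B,A)|/(|B|D)$ over all $2^{k'}$-element subsets of $B$.
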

 \begin{proof} We  show the existence of a graph with the claimed properties using the probabilistic method. Once we know that the graph exists, it can be constructed by exhaustive search.  For some constant $c$ that will be fixed later, we consider a random function
 $f : \zo^n \times \zo^d \mapping \zo^{k}$.  This defines the bipartite graph $G = (L =\zo^n, R= \zo^{k}, E \subseteq L \times R)$ in the following way: $(x,z)$ is an edge labeled by $y$ if $f(x,y)=z$.
For the analysis, let us fix $k' \in \{1,\ldots, k\}$ and let us consider the graph $G' = (L, R', E' \subseteq L \times R')$ that is the $k'$-prefix of $G$. Let $K'=2^{k'}$ and $N=2^n$. Let us consider $B \subseteq \zo^n$ of size $|B| \geq K'$, and $A \subseteq R'$.
For a fixed $x \in B$ and $y \in \zo^d$, the probability that the $y$-labeled edge outgoing from $x$  lands in $A$ is $|A|/|R'|$. By the Chernoff bounds,
\[
\prob \bigg [ \bigg | \frac{|E'(B, A)|}{|B| \cdot D} -\frac{|A|}{|R'|} \bigg | > \epsilon \bigg ] \leq 2^{-\Omega(K' \cdot D \cdot \epsilon^2)}.
\]
The probability that relation~(\ref{e:eq1}) fails for  some $B \subseteq \zo^{k'}$ of size $|B| \geq K'$  and some $A \subseteq R'$ is bounded by $2^{K'} \cdot {N \choose K'} \cdot 2^{-\Omega(K' \cdot D \cdot \epsilon^2)}$, because $A$ can be chosen in $2^{K'} $ ways, and we can consider that $B$ has size exactly $K'$ and there are ${N \choose K'}$ possible choices of such $B$'s. If $D = cn/\epsilon^2$ and $c$ is sufficiently large, the above probability is  less than $(1/4)2^{-k'}$.  Therefore the probability that relation (\ref{e:eq1})  fails for some $k'$, some $B$ and some $A$ is less than $1/4$. It follows, that there exists a graph that satisfies the hypothesis. 
\end{proof}

Let $G = (L,R, E \subseteq L \times R)$ be the $(k,\epsilon)$-extractor from Lemma~\ref{l:extrand}.  Let  $\delta = (2\epsilon)^{1/2}$.  As hinted in our discussion above, by manipulating relation~(\ref{e:eq1}), we can show that for every $B \subseteq L$ of size $|B| >  2^k$, $(1-\delta)$ fraction of nodes in $B$ are rich owners with respect to $B$. This proves the rich owner property for sets in the large regime. If $B$ is in the small regime, then $B$ has size $2^{k'}$ for some $k' < k$ (for simplicity, we assume that the size of $B$ is a power of two). Let us consider $G'$, the $k'$-prefix of $G$. As above, in $G'$, $(1-\delta)$ fraction of elements $x$ in $B$ are rich owners with respect to $B$.  Recall that this means that if $x$ is a rich owner then $(1-\delta)$ fraction of its neighbors have $B$-degree bounded by $s$, where $s=(2/\delta^2) |B|\cdot D/ |R'| = O(n/\epsilon^3)$. Using the same hashing technique, we can ``split" each edge into $\poly(n/\epsilon)$ new edges.  More precisely,  an edge $(x,z)$ in $G'$ is transformed into $\ell = (1/\delta)sn $ new edges,
$(p_1, x \bmod  p_1, z), \ldots,  (p_\ell, x \bmod  p_\ell, z)$, where, as above,  $p_i$ is the $i$-th prime number. 
If $x$ is a rich owner then $(1-2\delta)$ of its ``new" neighbors (obtained after splitting) have $B$-degree equal to one, as desired, because hashing isolates $x$ from the other neighbors of $z$.  The $B$-degree of these right nodes continues to be one also in $G$, because when merging nodes to obtain $G'$ from $G$, the right degrees can only increase. Note that the right nodes in $G$ have as labels the $k$-bit strings, and after splitting we need to add to the labels the hash values $(p_i, x \bmod p_i)$, which are of length $O(\log n/\epsilon)$, and this is the cause for the $O(\log n/\delta)$ overhead in Theorem~~\ref{t:richownergraph}.

The \emph{explicit} graph in Theorem~\ref{t:richownergraph} is obtained in the same way, except that instead of the ``prefix" extractor from Lemma~\ref{l:extrand} we use the Raz-Reingold-Vadhan extractor~\cite{rareva:c:extractor}.

\section{Note}
This paper is dedicated to the memory of Professor Solomon Marcus. In the 1978 freshman Real Analysis class at the University of Bucharest, he asked several questions (on functions having pathological properties regarding finite variation). This has been my first contact with him, and, not coincidentally, also the first time I became engaged in a type of activity that resembled mathematical research. Over the years, we had several discussions, on scientific but also on rather mundane issues, and  each time, without exception, I was stunned by his encyclopedic knowledge on  diverse topics, including of course various fields of mathematics, but also literature, social sciences, philosophy, and whatnot.

\begin{quote}
The utilitarian  function of mathematics is in most cases a consequence of its cognitive function, but the temporal distance between the cognitive moment and the utilitarian one is usually imprevisible.

Solomon Marcus
\end{quote}



\begin{thebibliography}{CRVW02}

\bibitem[BZ14]{bau-zim:c:linlist}
Bruno Bauwens and Marius Zimand.
\newblock Linear list-approximation for short programs (or the power of a few
  random bits).
\newblock In {\em {IEEE} 29th Conference on Computational Complexity, {CCC}
  2014, Vancouver, BC, Canada, June 11-13, 2014}, pages 241--247. {IEEE}, 2014.

\bibitem[Cha66]{cha:j:length-of-programs}
G.~Chaitin.
\newblock On the length of programs for computing finite binary sequences.
\newblock {\em Journal of the ACM}, 13:547--569, 1966.

\bibitem[Cov75]{cov:j:slepwolfergodic}
Thomas~M. Cover.
\newblock A proof of the data compression theorem of {S}lepian and {W}olf for
  ergodic sources (corresp.).
\newblock {\em {IEEE} Transactions on Information Theory}, 21(2):226--228,
  1975.

\bibitem[CRVW02]{cap-rei-vad-wig:c:conductors}
M.~R. Capalbo, O.~Reingold, S.~P. Vadhan, and A.~Wigderson.
\newblock Randomness conductors and constant-degree lossless expanders.
\newblock In John~H. Reif, editor, {\em STOC}, pages 659--668. ACM, 2002.

\bibitem[DW85]{due-wol:j:slep-wolf-individ}
G.~Dueck and L.~Wolters.
\newblock The {S}lepian-{W}olf theorem for individual sequences.
\newblock {\em Problems of Control and Information Theory}, 14:437--450, 1985.

\bibitem[Kol65]{kol:j:kolmcomplexity}
A.N. Kolmogorov.
\newblock Three approaches to the quantitative definition of information.
\newblock {\em Problems Inform. Transmission}, 1(1):1--7, 1965.

\bibitem[Kuz09]{kas:j:slep-wolf-individ}
S.~Kuzuoka.
\newblock {S}lepian-{W}olf coding of individual sequences based on ensembles of
  linear functions.
\newblock {\em IEICE Trans. Fundamentals}, E92-A(10):2393--2401, 2009.

\bibitem[LZ76]{lem-ziv:j:compress}
A.~Lempel and J.~Ziv.
\newblock On the complexity of finite sequences.
\newblock {\em IEEE Trans. Inf. Theory}, IT-22:75--81, 1976.

\bibitem[Muc02]{muc:j:condcomp}
Andrei~A. Muchnik.
\newblock Conditional complexity and codes.
\newblock {\em Theor. Comput. Sci.}, 271(1-2):97--109, 2002.

\bibitem[Rom05]{rom:j:slepwolf}
A.~Romashchenko.
\newblock Complexity interpretation for the fork network coding.
\newblock {\em Information Processes}, 5(1):20--28, 2005.
\newblock In Russian. Available in English as~\cite{rom:t:slepwolf}.

\bibitem[Rom16]{rom:t:slepwolf}
Andrei Romashchenko.
\newblock Coding in the fork network in the framework of {K}olmogorov
  complexity.
\newblock {\em CoRR}, abs/1602.02648, 2016.

\bibitem[RR99]{raz-rei:c:extcon}
Ran Raz and Omer Reingold.
\newblock On recycling the randomness of states in space bounded computation.
\newblock In Jeffrey~Scott Vitter, Lawrence~L. Larmore, and Frank~Thomson
  Leighton, editors, {\em STOC}, pages 159--168. ACM, 1999.

\bibitem[RRV99]{rareva:c:extractor}
R.~Raz, O.~Reingold, and S.~Vadhan.
\newblock Extracting all the randomness and reducing the error in {T}revisan's
  extractor.
\newblock In {\em Proceedings of the 30th ACM Symposium on Theory of
  Computing}, pages 149--158. ACM Press, May 1999.

\bibitem[Sol64]{sol:j:inductive}
R.~Solomonoff.
\newblock A formal theory of inductive inference.
\newblock {\em Information and Control}, 7:224--254, 1964.

\bibitem[SW73]{sle-wol:j:distribcompression}
D.~Slepian and J.K. Wolf.
\newblock Noiseless coding of correlated information sources.
\newblock {\em {IEEE} Transactions on Information Theory}, 19(4):471--480,
  1973.

\bibitem[WZ94]{wyn-ziv:j:compress}
A.D. Wyner and J.~Ziv.
\newblock The sliding window {L}empel-{Z}iv is asymptotically optimal.
\newblock {\em Proc. IEEE}, 2(6):872--877, 1994.

\bibitem[Zim17]{zim:c:kolmslepianwolf}
M.~Zimand.
\newblock {K}olmogorov complexity version of {S}lepian-{W}olf coding.
\newblock In {\em STOC 2017}, pages 22--32. ACM, June 2017.

\bibitem[Ziv78]{ziv:j:compressindivid}
J.~Ziv.
\newblock Coding theorems for individual sequences.
\newblock {\em IEEE Trans. Inform. Theory}, IT-24:405--412, 1978.

\bibitem[Ziv84]{ziv:j:compresshelper}
J.~Ziv.
\newblock Fixed-rate encoding of individual sequences with side information.
\newblock {\em IEEE Trans. Inform. Theory}, IT-30:348--352--412, 1984.

\end{thebibliography}


\end{document}